\newtheorem{theorem}{Theorem}
\newtheorem{lemma}[theorem]{Lemma}
\newtheorem{definition}[theorem]{Definition}
\def\FPRAS{\mathrm{FPRAS}}
\def\FPAUS{\mathrm{FPAUS}}
\def\MPSCJ{\mathrm{MPSCJ}}
\def\sharpMPSCJ{\mathrm{\#MPSCJ}}
\def\SPSCJ{\mathrm{SPSCJ}}
\def\sharpSPSCJ{\mathrm{\#SPSCJ}}
\def\sharpFSPSCJ{\mathrm{\#Fitch-SPSCJ}}
\def\DCJ{\mathrm{DCJ}}
\def\SCJ{\mathrm{SCJ}}
\def\P{\mathrm{P}}
\def\NP{\mathrm{NP}}
\def\NPcomp{\mathrm{NP-complete}}
\def\NPhard{\mathrm{NP-hard}}
\def\FP{\mathrm{FP}}
\def\sharpP{\mathrm{\#P}}
\def\sharpPcomp{\mathrm{\#P-complete}}
\def\RP{\mathrm{RP}}
\def\FPAUS{\mathrm{FPAUS}}
\def\SAT{\mathrm{3SAT}}
\def\sharp3SAT{\mathrm{\#3SAT}}
\def\CNF{\mathrm{3CNF}}
\def\root{\mathrm{root}}
\journal{Theoretical Computer Science}
\begin{document}

\begin{frontmatter}

% Page heads
%\markboth{I. Mikl\'os et al.}{On sampling SCJ rearrangement scenarios}

% Title portion
\title{On sampling SCJ rearrangement scenarios}
\author{Istv\'an Mikl\'os$^{1,2}$ and
S\'andor Z. Kiss$^{2,3}$ and
Eric Tannier$^4$}
\address{$^{1}$Department of Stochastics, R{\'e}nyi Institute, 1053
  Budapest, Re{\'a}ltanoda u. 13-15, Hungary\\
$^{2}$ Data Mining and Search Research Group, Computer and Automation
Institute, Hungarian Academy of Sciences, 1111 Budapest,
L\'agym\'anyosi \'ut 11, Hungary\\
$^3$ University of Technology and Economics, Department of Algebra,
1111 Budapest, Egry J\'ozsef utca 1, Hungary \\
$^{4}$ INRIA Rh\^one-Alpes ; Universit\'e de Lyon ; Universit\'e Lyon 1 ; CNRS, UMR5558, Laboratoire de Biom\'etrie et Biologie \'Evolutive, F-69622, Villeurbanne, France.}

\begin{abstract}
The Single Cut or Join ($\SCJ$) operation on genomes,
generalizing chromosome evolution by fusions and fissions,
is the computationally simplest known model
of genome rearrangement. While
most genome rearrangement problems are already hard when comparing
three genomes, it is possible to compute in polynomial time
a most parsimonious $\SCJ$ scenario for an arbitrary
number of genomes related by a binary phylogenetic tree.

Here we consider the problems of sampling and counting the most
parsimonious $\SCJ$ scenarios. We show that both the sampling and
counting problems are easy for two genomes, and we relate $\SCJ$
scenarios to alternating permutations.
However, for an arbitrary number of genomes related by a binary phylogenetic tree,
the counting and sampling problems become hard. We prove that if a
Fully Polynomial Randomized Approximation Scheme or a Fully Polynomial
Almost Uniform Sampler exist for the most parsimonious $\SCJ$
scenario, then $\RP = \NP$.

The proof has a wider scope than genome rearrangements: the same result
holds for parsimonious evolutionary scenarios on any set of discrete characters.
\end{abstract}

%\category{F.2.2}{ANALYSIS OF ALGORITHMS AND PROBLEM COMPLEXITY}{Computations on discrete structures}
%\category{G.2.1}{DISCRETE MATHEMATICS}{Counting problems}

%\terms{Computational complexity}

\begin{keyword}
MSC codes: F.2.2: Computations on discrete structures \sep G.2.1:
  Counting problems \sep free keywords:
  Single cut and join \sep FPAUS \sep FPRAS \sep non-approximability
\end{keyword}

%\acmformat{Mikl\'os, I., Kiss, S.Z., Tannier, E.  2012. On sampling SCJ rearrangement scenarios.}

% \begin{bottomstuff}
% We acknowledge financial support from grant \#FA9550-12-1-0405 from the U.S.
% Air Force Office of Scientific Research (AFOSR) and the Defense Advanced
% Research Projects Agency (DARPA). I.M. was supported by OTKA
% PD84297. S.Z.K. was supported by OTKA K77476 and NK105645. E.T.
% was funded by the Ancestrome project from the Agence Nationale pour
% la Recherche, ANR-10-BINF-01-01.

% Author's addresses: I. Mikl\'os, R\'enyi Institute, 1053, Budapest,
% Re\'altanoda u, 13/15, Hungary and Institute of Automatization and Computation, 1111
%   Budapest, L\'agym\'anyosi út 11, Hungary  {and} S. Z. Kiss, Institute of Automatization and Computation, 1111
%   Budapest, L\'agym\'anyosi út 11, Hungary and University of Technology and Economics, Department of Algebra,
% 1111 Budapest, Egry J\'ozsef utca 1. {and} E. Tannier, INRIA Rh\^one-Alpes, F-38334 Montbonnot, France; Universit\'e de Lyon 1, CNRS UMR5558
% Laboratoire de Biom\'etrie et Biologie \'Evolutive, F-69622 Villeurbanne, France
% \end{bottomstuff}

%\maketitle

\end{frontmatter}

\section{Introduction}

\label{sec:introduction}

The genome rearrangement problem is one of the oldest optimization problems in
computational biology. It has been already formulated by
\cite{sn1941}. It consists in finding the minimum number of
rearrangement events that can explain the gene order differences between
two genomes.
According to how genomes and rearrangements are defined, a
number of variants have been studied \citep{Fertin2009}. In many
cases, efficient algorithms running in polynomial time exist for
finding one solution, but they do not scale up to three genomes:
finding a median, {\em i.e.}, a genome minimizing the sum of
the number of rearrangements to the three others, is almost
always $\NPhard$.

Moreover, one solution is not representative of the whole
optimal solution space. So another computational problem
is to find all minimum solutions.
But the number of minimum solutions is often so high that
their explicit enumeration is not possible in polynomial running
time. A small number of samples coming from (almost) the uniform
distribution is usually sufficient for testing evolutionary hypotheses
like the Random Breakpoint Model \citep{Alekseyev2010,Bergeron2008} or
the sizes and positions of inversions
\citep{Ajana2002,dmr2008}. Drawing conclusions from one scenario or
from a biased sample should be avoided as it might be very misleading
\citep{Bergeron2008,md2009}. 

% Some studies have focused on studying the solution
% space of some rearrangements and small or restricted
% genomes \citep{Siepel2003,Braga2008,bs2010,ob2010},
Statistical methods, like Markov chain Monte Carlo methods, can
sample genome rearrangement scenarios \citep{dmr2008,durrett2004,larget2002,larget2005,Miklos2010},
but often there are no available results for their mixing time.
Only in the case of the Double Cut-and-Join ($\DCJ$)
rearrangement model, a Fully Polynomial
time Randomized Approximation Scheme ($\FPRAS$) and a Fully Polynomial
Almost Uniform Sampler ($\FPAUS$) are available for counting and sampling
most parsimonious rearrangement scenarios between two genomes \citep{mt2012}.
But this is hardly generalizable to more than two genomes because
for $\DCJ$ the median problem is $\NPhard$ \citep{tzs2009}.

Recently, a simpler rearrangement model has been published
by \cite{fm2011} under the name {\em Single Cut or Join}, or $\SCJ$.
It consists in a gain and loss process on gene {\em adjacencies},
and from a chromosomal point of view, allows fusions and fissions,
linearization of circular chromosomes and vice versa.
The computational simplicity of this model is highlighted 
by the existence of an easy polynomial running time algorithm
for the median problem. More generally, finding a most
parsimonious $\SCJ$ scenario on an arbitrary evolutionary tree (the small
parsimony problem) is also polynomial.

Therefore, it is reasonable to assume that
at least stochastic approximations are available for the number of most
parsimonious $\SCJ$ scenarios. We show here that it is the case for two
genomes. However, we report a negative
result for the small parsimony problem: the number of
most parsimonious $\SCJ$ scenarios cannot be approximated in
polynomial time even in a stochastic manner unless $\RP = \NP$.
This bounds the possibilities of using this model for genomic
studies.

The paper is organized as follows. The next Section formally
introduces useful vocabulary in genome rearrangement and random algorithm
complexity. In Section~\ref{sec:twogenomes} we show that counting
and sampling $\SCJ$ scenarios between two genomes is easy,
and show the relation with the so-called Andr\'e's problem on
alternating permutations.
The hardness theorems for an arbitrary number of genomes
are stated and proved in Section~\ref{sec:arbitrarytree}.
The paper ends with a discussion on the impact of these results and the
statements of some related open problems. 

\section{Genome rearrangement: finding, counting, sampling}
\label{sec:definitions}

\subsection{Genome rearrangement by $\SCJ$}

\begin{definition}
A \emph{genome} is a directed, edge-labelled graph, in which each
vertex has a total degree at most 2,
 and each label is unique. Each edge is called a \emph{gene}.
The beginning of an edge is called \emph{tail}, the end of an
edge is called \emph{head}, the joint name of heads and tails
is \emph{extremities}.
The vertices with degree $2$ are called
\emph{adjacencies}, the vertices with degree $1$ are called
\emph{telomeres}.
\end{definition}

By definition, a genome is a set of disjoint paths and cycles,
and neither the paths nor the cycles
are necessarily directed. The components of the genome are the
\emph{chromosomes}.
An example for a couple of genomes is drawn on Figure~\ref{fig:genome}.
\begin{figure}[tpb]%figure1
\centerline{\includegraphics[scale=0.5]{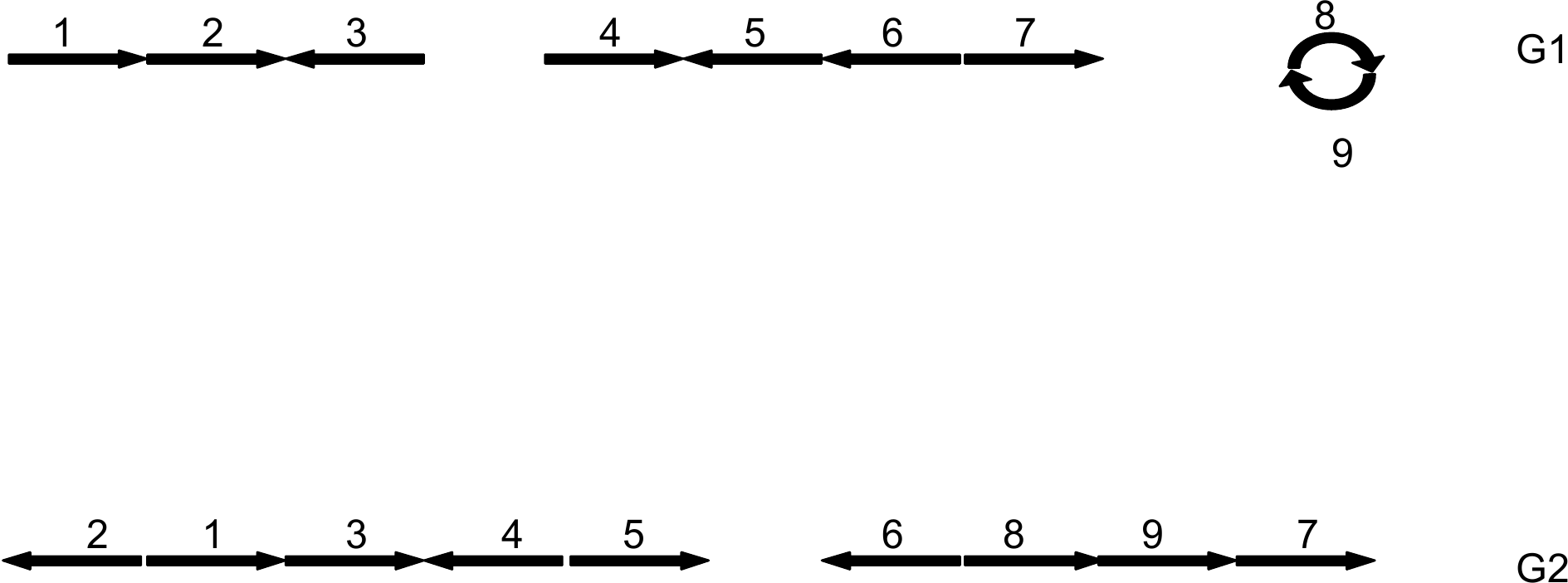}}
\caption{An example of two genomes with 9 genes.}\label{fig:genome}
\end{figure}
All adjacencies correspond to two gene extremities and
telomeres to one. For example, $(h1,t3)$ describes the vertex
of genome $G_2$ in Figure~\ref{fig:genome} in which the head of gene
$1$ and the tail of gene $3$ meet, and similarly, $(h7)$
is the telomere where gene $7$ ends.
A genome is fully described by a list of such descriptions of
adjacencies and telomeres. 
% Actually, the list of adjacencies already
% describes the genome, since the telomeres are exactly the extremities not
% participating in any adjacency.
%Two genomes with the same label set are {\em co-tailed} if they have
%the same telomeres. It is the case for the two genomes of Figure~\ref{fig:genome}.

We will study several genomes simultaneously. We always assume the
genomes we compare have the same label set. It means they are required
to have exactly the same gene content.

\begin{definition}
A \emph{Single Cut or Join ($\SCJ$)}  operation transforms one genome
into another by modifying the adjacencies and telomeres
in one of the following $2$ ways:
\begin{itemize}
\item take an adjacency $(a,b)$ and replace it by two telomeres, $(a)$
  and $(b)$.
\item take two telomeres $(a)$ and $(b)$, and replace them by an
  adjacency $(a,b)$.
\end{itemize}
\end{definition}

Given two genomes $G_1$ and $G_2$, it is always
possible to transform one into the other by a sequence of $\SCJ$ operations
\citep{fm2011}. Such a sequence is called an $\SCJ$ {\em scenario}
for $G_1$ and $G_2$. Scenarios of minimum length are called
{\em most parsimonious}, and their length is
the {\em $\SCJ$ distance} and is denoted by $d_{\SCJ}(G_1,G_2)$.

% Given $k$ genomes, $G_1, G_2, \ldots G_k$, the $k$-median $\SCJ$
% problem asks for a median genome $G_m$ and corresponding $\SCJ$
% scenarios for each edge,  which minimizes the sum of the $\SCJ$ operations, ie.
% \begin{equation}
% G_m := \arg\min_{G}\sum_{i=1}^k d_{\SCJ}(G_i,G)
% \end{equation}

The adjacency graph was introduced by \cite{bms2006} to compute
the $\DCJ$ distance between two genomes. It can be used to study
$\SCJ$ scenarios as well:

\begin{definition}
The \emph{adjacency graph} $G(V_1\cup V_2,E)$ of two genomes $G_1$ and $G_2$
is a bipartite multigraph in which $V_1$ is the set of adjacencies
and telomeres of $G_1$ and $V_2$ is the set of adjacencies and telomeres of $G_2$.
The number of edges between $u \in V_1$ and $v \in V_2$ is the number of
extremities they share.
\end{definition}

Each vertex of the adjacency graph has either degree $1$ or $2$, and
thus, the adjacency graph falls into disjoint cycles and paths. Each path
has one of the following three types:

\begin{itemize}
\item {\em odd path}, containing an odd number of edges and an even number of vertices,
\item {\em $W$-shaped path}, which is an even path with two endpoints in $V_1$
\item {\em $M$-shaped path}, which is an even path with two endpoints in $V_2$
\end{itemize}

In addition we call {\em trivial} components the cycles with
two edges and the paths with one edge. An adjacency graph example 
can be seen on Figure~\ref{fig:adjacency}.

\begin{figure}[tpb]%figure2
\centerline{\includegraphics[scale=0.5]{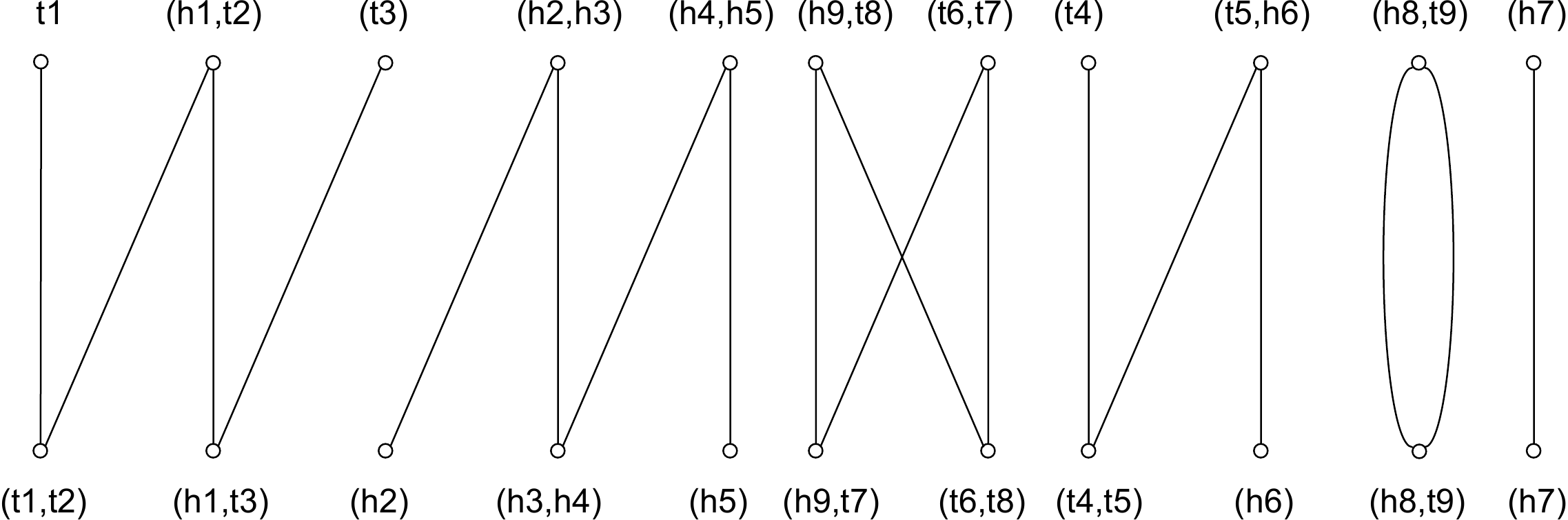}}
\caption{The adjacency graph of the two genomes on Fig.~\ref{fig:genome}}\label{fig:adjacency}
\end{figure}

\subsection{Counting and Sampling $\SCJ$ scenarios}

\begin{definition}
A decision problem is in $\NP$ if a non-deterministic Turing Machine
can solve it in polynomial time. An equivalent definition is that a
witness proving the ``yes'' answer to the question can be verified in
polynomial time. A counting problem is in $\sharpP$ if it asks for the
number of witnesses of a problem in $\NP$.
\end{definition}

\begin{definition}
A decision problem is in $\RP$ if a random algorithm exists with the
following properties: a) the running time is deterministic and grows
polynomially with the size of the input, b) if the true answer is
``no'', then the algorithm answers ``no'' with probability $1$, c) if
the true answer is ``yes'', then it answers ``yes'' with probability
at least $1/2$.
\end{definition}

\begin{definition}
The \emph{Most Parsimonious $\SCJ$ scenario problem ($\MPSCJ$)} is
to compute $d_{\SCJ}(G_1,G_2)$ for two genomes $G_1$ and $G_2$ given
as input.
The \emph{$\sharpMPSCJ$} problem asks for the number of scenarios
of length $d_{\SCJ}(G_1,G_2)$, denoted by $\sharpMPSCJ(G_1,G_2)$.
\end{definition}

For example, the $\SCJ$ distance between the two genomes of Figure~\ref{fig:genome}
is $12$ and there are $16 \times {12 \choose 3 \ 3 \  4 \  2}$
different scenarios.

$\MPSCJ$ is an optimization problem, which has
a natural corresponding decision problem asking if there is a scenario
with a given number of $\SCJ$ operations. So we may write that
$\sharpMPSCJ \in \sharpP$, which means that $\sharpMPSCJ$
asks for the number of witnesses of the
decision problem ``Is there a scenario for $G_1$ and $G_2$
of size $d_{\SCJ}(G_1,G_2)$ ?''.

\begin{definition}
Given a rooted binary tree $T(V,E)$  with $k$ leaves, and genomes $G_1,
G_2, \ldots, G_k$ assigned to the leaves, the \emph{small parsimony SCJ} problem ($\SPSCJ$)
asks for an assignment of genomes to the internal nodes of $T$ and
an $\SCJ$ scenario for each edge, which minimize the number of $\SCJ$
operations along the tree, {\em i.e.},
\begin{equation}\label{eq:spscj_score}
\sum_{(v_i,v_j) \in E} d_{\SCJ}(G_i,G_j)
\end{equation}
where $G_i$ ($G_j$) is the genome which is assigned to vertex $v_i \in
V$ ($v_j \in V$).

The \emph{small parsimony} term is borrowed from the well-known
textbook problem \citep{jp2004}, the \emph{small parsimony problem of
  discrete characters}: Given a rooted binary tree $T(V,E)$ with $k$
leaves labelled by characters from a finite alphabet, label the internal
nodes such that the number of edges labelled with different characters
at their two ends is minimized.

The solution space of the most parsimonious $\SCJ$ scenarios on a tree
consists of all possible combinations of assignments to the internal
nodes together with the possible $\SCJ$ scenarios on the edges of the
phylogenetic tree. The $\sharpSPSCJ$ problem asks the size of this
solution space.
\end{definition}

As the decision version of $\SPSCJ$ is trivially in $\NP$, $\sharpSPSCJ$
is in $\sharpP$. There are subclasses
in $\sharpP$ containing counting problems which are approximable
by polynomial deterministic or randomized algorithms.

\begin{definition}
A counting problem in $\sharpP$ is in $\FP$ if there is a polynomial running time algorithm
which gives the solution. It is $\sharpPcomp$ if any problem in $\sharpP$ 
can be reduced to it by a polynomial-time counting reduction.
\end{definition}
% Two other classes,
% $\FPRAS$ and $\FPAUS$, concern the approximability of the number of witnesses.

\begin{definition}
A counting problem in $\sharpP$ is in $\FPRAS$ (\emph{Fully Polynomial
Randomized Approximation Scheme}) if there exists a randomized
algorithm such that for any instance $x$, and $\epsilon, \delta > 0$,
it generates an approximation $\hat{f}$ for the solution $f$, satisfying
\begin{equation}
P\left(\frac{f}{1+\epsilon} \leq \hat{f} \leq f(1+\epsilon)\right) \geq 1 - \delta
\end{equation}
and the algorithm has a time complexity bounded by a polynomial of
$|x|$, $1/\epsilon$ and $-\log(\delta)$.
\end{definition}

The total variational distance $d_{TV}(p,\pi)$ between two discrete
distributions $p$ and $\pi$ over the set $X$ is defined as
\begin{equation}
d_{TV}(p,\pi) := \frac{1}{2}\sum_{x \in X} |p(x) - \pi(x)|
\end{equation}

\begin{definition}
A counting problem in $\sharpP$ is in $\FPAUS$ if there exists a randomized
algorithm (a \emph{Fully Polynomial Almost Uniform Sampler} that is
also abbreviated as $\FPAUS$) such that for any instance $x$, and $\epsilon > 0$, it
generates a random element of the solution space following a distribution $p$ satisfying
\begin{equation}
d_{TV}(p,U) \leq \epsilon
\end{equation}
where $U$ is the uniform distribution over the solution space,
and the algorithm has a time complexity bounded by a polynomial of $|x|$, and
$-\log(\epsilon)$. 
\end{definition}

\section{Most parsimonious SCJ scenarios between two genomes}
\label{sec:twogenomes}

% \subsection{Counting the number of solutions}

\subsection{A dynamic programming solution}

The $\SCJ$ distance can be calculated in polynomial time, as stated in
the following theorem.
%Feij\~ao
%and Meidanis proved it:
\begin{theorem}\label{theo:dscj}
 (\cite{fm2011})
Let $\Pi_1$ denote the set of adjacencies in genome $G_1$ and let
$\Pi_2$ denote the set of adjacencies in genome $G_2$. Then
\begin{equation}
d_{\SCJ}(G_1,G_2) = \left|\Pi_1 \Delta \Pi_2\right|
\end{equation}
where $\Delta$ denotes the symmetric difference of the two sets.
\end{theorem}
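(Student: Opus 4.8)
The plan is to establish the two inequalities $d_{\SCJ}(G_1,G_2) \ge |\Pi_1 \Delta \Pi_2|$ and $d_{\SCJ}(G_1,G_2) \le |\Pi_1 \Delta \Pi_2|$ separately. Throughout I would exploit the fact that, since $G_1$ and $G_2$ share the same gene content, the full set of gene extremities is fixed, and a genome is entirely determined by its set of adjacencies: the telomeres are exactly the extremities left unmatched. Transforming $G_1$ into $G_2$ is therefore the same as transforming the adjacency set $\Pi_1$ into $\Pi_2$.

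For the lower bound, I would track the current adjacency set $\Pi$ along an arbitrary scenario. The key observation is that each $\SCJ$ operation changes $\Pi$ by exactly one element: a cut deletes a single adjacency and a join adds a single adjacency. Hence the quantity $|\Pi \Delta \Pi_2|$ can decrease by at most $1$ at each step. Since it starts at $|\Pi_1 \Delta \Pi_2|$ and must reach $0$ when the target genome is obtained, any scenario uses at least $|\Pi_1 \Delta \Pi_2|$ operations.

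For the upper bound, I would exhibit an explicit scenario of exactly that length by partitioning $\Pi_1 \Delta \Pi_2$ into $\Pi_1 \setminus \Pi_2$ (adjacencies to remove) and $\Pi_2 \setminus \Pi_1$ (adjacencies to create), and performing \emph{all the cuts before all the joins}. The cuts delete the $|\Pi_1 \setminus \Pi_2|$ adjacencies present in $G_1$ but absent from $G_2$, and are always applicable. The only delicate point is that each join requires its two extremities to be free telomeres at the moment it is performed. I would verify this feasibility condition as follows: if $(a,b) \in \Pi_2 \setminus \Pi_1$, then in $G_1$ the extremity $a$ is either already a telomere, or it is matched to some $c \neq b$, so that $(a,c) \in \Pi_1$; but in $G_2$ the extremity $a$ is matched to $b$, hence cannot also be matched to $c$, giving $(a,c) \in \Pi_1 \setminus \Pi_2$, an adjacency already removed during the cut phase. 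The same holds for $b$. Thus after the cut phase every extremity needed for a join is free, all $|\Pi_2 \setminus \Pi_1|$ joins succeed, and the scenario has total length $|\Pi_1 \setminus \Pi_2| + |\Pi_2 \setminus \Pi_1| = |\Pi_1 \Delta \Pi_2|$.

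The main obstacle is precisely this feasibility argument for the joins; the cut-before-join ordering is what resolves it, and the decisive structural fact is that adjacencies behave like a partial matching on gene extremities, so any conflict at an extremity $a$ in $G_1$ stems from an adjacency that the symmetric difference already forces us to cut.
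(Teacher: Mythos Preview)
Your argument is correct. Note, however, that the paper does not supply its own proof of this theorem: it is quoted from \cite{fm2011}, and the paper merely follows the statement with a one-sentence gloss (cut all adjacencies in $\Pi_1\setminus\Pi_2$, then join all adjacencies in $\Pi_2\setminus\Pi_1$). Your upper-bound construction is exactly this cut-before-join scenario, and you additionally spell out the feasibility check for the joins and the lower bound via the potential function $|\Pi\Delta\Pi_2|$, neither of which the present paper makes explicit. So there is nothing to compare against beyond that brief remark; your proof is a complete and standard justification of the cited result.
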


Theorem~\ref{theo:dscj} says that any shortest path transforming $G_1$
into $G_2$ has to cut all the adjacencies in $G_1 \setminus G_2$ and
add all the adjacencies in $G_2 \setminus G_1$, and there are no more
$\SCJ$ operations. Drawing one solution is easy: first cut all adjacencies in $G_1 \setminus G_2$,
then join all adjacencies in $G_2 \setminus G_1$. But if we want to explore the solution
space, we have to observe that if an
adjacency $(a,b)$ exists in $G_1 \setminus G_2$ and an adjacency
$(a,c)$ exists in $G_2 \setminus G_1$, then first adjacency $(a,b)$
must be cut to create telomere $(a)$, and then telomere $(a)$ can be
connected to telomere $(c)$. Similarly, if extremity $c$ belongs to an
adjacency in $G_1\setminus G_2$, then it must be also cut before
connecting the two telomeres. Therefore there are restrictions on the
order of cuts and joins.

The allowed order of cuts and joins can be
read from the adjacency graph: When an $\SCJ$ operation acts on $G_1$ and thus
creates $G_1'$, it also acts
on the adjacency graph of $G_1$ and $G_2$ by transforming it
into the adjacency graph of $G_1'$ and $G_2$. Therefore
the transformation of $G_1$ into $G_2$ can be seen as a
transformation of the adjacency graph into trivial components.
We say that an $\SCJ$ scenario {\em sorts} the adjacency graph
if it transforms it into trivial components. As any $\SCJ$
operation in a most parsimonious scenario acts on a single component, we say that the
set of $\SCJ$ operations acting on that component {\em sort}
it if they transform it into trivial components.

We first give the way of computing the number of scenarios
for sorting one component. Then the
number of scenarios for several components will be deduced
by a combination of scenarios from each component.

Let $W(i)$ (respectively $M(i)$, $O(i)$ and $C(i)$) denote the number of most parsimonious $\SCJ$
scenarios sorting a $W$-shaped path (respectively $M$-shaped path, odd
path, cycle) with $i$ adjacencies in $G_1$.~The following dynamic
programming algorithm allows to compute all these numbers.

For a trivial component, no $\SCJ$ operation is needed so there is only one solution: the empty sequence. This gives
\begin{eqnarray}
C(1) & = & 1 \\
O(0) & = & 1
\end{eqnarray}

The smallest $W$-shaped path has 0 adjacency in $G_1$ and one in $G_2$. There is a unique
solution sorting it: add the adjacency. This gives
\begin{equation}
W(0) = 1
\end{equation}

A scenario of any other component starts with cutting an adjacency in $G_1$.
For a $W$-shaped path, this results in two $W$-shaped paths.
For an $M$-shaped path, this results in two odd paths.
For an odd path, this results in an odd path and a $W$-shaped path.
For a cycle, this results in a $W$-shaped path.
Each emerging component has fewer adjacencies in $G_1$,
and hence, a dynamic programming recursion can be applied: the
resulting components must be sorted and in case of two resulting
components, the sorting steps on the components must be merged. Hence
the dynamic programming recursions are
\begin{eqnarray}
C(i) &=& i \times W(i-1) \\
W(i) &=& \sum_{j=1}^i {2i \choose 2j-1} W(j-1) W(i-j)\\
M(i) &=& \sum_{j=1}^i {2i-2 \choose 2j-2} O(j-1) O(i-j) \\
O(i) &=& \sum_{j=1}^i {2i-1 \choose 2j-2} O(j-1) W(i-j)
\end{eqnarray}
These dynamic programming recursions can be used for counting and
sampling by the classical
Forward-Backward phases: in the Forward phase the number of
solutions is calculated, and in the Backward phase
one random solution is chosen based on the numbers in the sums.

So it is possible to compute $W(i)$, $M(i)$, $O(i)$ and $C(i)$ in polynomial
time and to sample one scenario from the uniform distribution.
We can then count and sample for several components by adding
a multinomial coefficient.

\begin{theorem}\label{theo:sharpMPSCJ}
Let $G_1$ and $G_2$ be two genomes with adjacency graph $AG$.
Assume $AG$ contains $i$ $M$-shaped paths,
with respectively $m_1, m_2, \ldots, m_i$ adjacencies in $G_1$;
$AG$ contains $j$ $W$-shaped paths,
with respectively $w_1,w_2,\ldots, w_j$ adjacencies in $G_1$;
$AG$ contains $k$ odd paths,
with respectively $v_1, v_2, \ldots, v_k$ adjacencies in $G_1$;
and $AG$ contains $l$ cycles,
with respectively $c_1,c_2, \ldots, c_l$ adjacencies in $G_1$.
The number of most parsimonious $\SCJ$ scenarios from $G_1$ to $G_2$ is
\begin{eqnarray}
\frac{\left(  \sum_{n=1}^i (2 m_n - 1) + \sum_{n=1}^j (2w_n+1) + \sum_{n=1}^k
(2v_n) + \sum_{n=1}^l (2c_n))\right)!}{\prod_{n=1}^i (2 m_n - 1)! \prod_{n=1}^j (2w_n+1)! \prod_{n=1}^k
(2v_n)! \prod_{n=1}^l (2c_n)!} &\times& \nonumber\\
\times \prod_{n=1}^i M(n) \prod_{n=1}^j W(n) \prod_{n=1}^k
O(n) \prod_{n=1}^l C(n) && \label{eq:numberofSCJ}
\end{eqnarray}
\end{theorem}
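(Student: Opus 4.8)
The plan is to show the count factors as a product over the connected components of the adjacency graph, corrected by a multinomial factor that accounts for interleaving the per-component operations into one global sequence. The first step is to fix, for each component type, the number of $\SCJ$ operations needed to sort it. Using Theorem~\ref{theo:dscj} and counting the adjacencies of $G_1$ and of $G_2$ inside a single component, one checks that a $W$-shaped path with $w$ adjacencies in $G_1$ also carries $w+1$ adjacencies of $G_2$, hence needs $2w+1$ operations; an $M$-shaped path with $m$ adjacencies in $G_1$ carries $m-1$ adjacencies of $G_2$ and needs $2m-1$ operations; an odd path with $v$ adjacencies needs $2v$; and a cycle with $c$ adjacencies needs $2c$. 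These are exactly the quantities whose factorials appear in the denominator of the multinomial coefficient, and their sum is the total scenario length written in the numerator.

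Next I would justify, by induction on the number of adjacencies, that $M(m_n), W(w_n), O(v_n), C(c_n)$ count the per-component scenarios and that the stated recursions hold. The crucial structural observation is that in any most parsimonious scenario for a nontrivial component the first operation must be the cut of an adjacency of $G_1$: a join is impossible because every adjacency of $G_2$ in the component shares an extremity with an adjacency of $G_1$ that is still present, as one reads off from the endpoint structure of the $W$/$M$/odd/cycle types. Cutting a chosen $G_1$-adjacency splits the component into the two subcomponents indicated in the text (two $W$-paths, two odd paths, an odd path together with a $W$-path, or a single $W$-path), each of which must then be sorted; the remaining operations form an arbitrary interleaving of the two independent sub-scenarios, which produces the binomial coefficient together with the recursive product. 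The base cases $C(1)=1$, $O(0)=1$, $W(0)=1$ close the induction.

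Finally I would assemble the global count. Since each $\SCJ$ operation in a parsimonious scenario acts inside one component and never interacts with another, restricting a global scenario to the operations of a fixed component yields a valid most parsimonious scenario for that component, and the relative order of operations inside a component is the only ordering constraint imposed. Hence a global scenario is determined uniquely by the data of a sorting scenario for each component together with a shuffle of the component blocks into a single linear order; the number of such shuffles of blocks of lengths $2m_n-1$, $2w_n+1$, $2v_n$ and $2c_n$ is exactly the multinomial coefficient in the statement. Multiplying the number of shuffles by the product of the per-component counts $\prod M(m_n)\prod W(w_n)\prod O(v_n)\prod C(c_n)$ gives the claimed formula.

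I expect the main obstacle to be the rigorous bijection in the last step: proving that the ``restrict to each component, then interleave'' map is a genuine bijection between global parsimonious scenarios and tuples of per-component scenarios equipped with an interleaving. One must verify both that no single operation touches two components, so that the restriction is well defined and the block lengths add, and that every interleaving is actually realizable as a parsimonious scenario, so that no cross-component ordering constraint is ever silently forced. The ``first move is a cut'' lemma used in the per-component induction is the other delicate point, since it is what guarantees that the recursions enumerate \emph{all} scenarios of a component rather than only some of them.
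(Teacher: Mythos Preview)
Your proposal is correct and follows essentially the same approach as the paper: the paper does not give a formal proof after the theorem statement but builds the argument in the preceding text, deriving the per-component recursions and then noting that the multinomial coefficient accounts for interleaving the component scenarios. Your write-up is in fact more careful than the paper's, since you explicitly isolate the two nontrivial points (the bijection between global scenarios and interleavings of per-component scenarios, and the ``first move is a cut'' observation underlying the recursions) that the paper leaves implicit. One incidental remark: you correctly write the per-component factors as $M(m_n)$, $W(w_n)$, $O(v_n)$, $C(c_n)$, whereas the displayed formula in the paper has $M(n)$, $W(n)$, $O(n)$, $C(n)$, which is evidently a typographical slip.
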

% \begin{proof}
% An $M$-shaped path with $m$ adjacencies in $G_1$ needs $2m-1$
%  $\SCJ$ operations, a $W$-shaped component with $w$
% adjacencies in $G_1$ needs $2w+1$ $\SCJ$ operations, an odd
% path with $v$ adjacencies in $G_1$ needs $2v$ $\SCJ$
% operations and a cycle with $c$ adjacencies in $G_1$ needs $2c$ $\SCJ$
% operations. Any particular solution for each component can be combined
% in as many ways as indicated in the multinomial coefficient in the
% first line of Equation~\ref{eq:numberofSCJ}. The number of
% possibilities to choose particular solutions for each component is the
% second line of Equation~\ref{eq:numberofSCJ}, according to the
% Lemmas~\ref{lem:m}, \ref{lem:w}, \ref{lem:odd}, \ref{lem:cycle}.
% \end{proof}

Sampling a scenario from the uniform distribution is then achieved by
generating a random
permutation with different colours and indices, one colour for each
component, and then wipe down the indices so get a permutation with
repeats. For each component, its sorting steps must be put into the
joint scenario indicated by the colour of the component.

We can then state the following theorem settling the complexity
of the comparison of two genomes by $\SCJ$.

\begin{theorem}\label{theo:sampling}
$\sharpMPSCJ$ is in $\FP$ and there is a polynomial algorithm
sampling from the exact uniform distribution of the solution space of an $\MPSCJ$ problem.
\end{theorem}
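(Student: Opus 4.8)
The plan is to assemble the statement from the structural and counting results already in hand. For the membership $\sharpMPSCJ \in \FP$, I would first observe that the adjacency graph $AG$ of $G_1$ and $G_2$ is built in time polynomial in the number of genes, and that its decomposition into cycles, odd paths, $W$-shaped and $M$-shaped paths is obtained by a single linear traversal. It then suffices to evaluate the closed formula of Theorem~\ref{theo:sharpMPSCJ}. The one subtlety to address is magnitude: the quantities involved grow factorially and are therefore exponentially large as integers. However, each has only polynomially many bits (of order $n\log n$, where $n$ bounds the number of adjacencies), so all the arithmetic is carried out over polynomially sized integers and remains polynomial-time.

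Next I would verify that the per-component counts $W(i)$, $M(i)$, $O(i)$, $C(i)$ are produced by the Forward phase of the dynamic program within this budget: each recursion is a sum of $O(i)$ terms, every term a product of a binomial coefficient and two previously computed entries, and there are $O(n)$ entries to fill. Hence the whole table is computed with polynomially many big-integer multiplications and additions. Multiplying the result by the multinomial coefficient of Theorem~\ref{theo:sharpMPSCJ} then yields $\sharpMPSCJ(G_1,G_2)$ in polynomial time, which settles the first half of the claim.

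For the sampling half I would invoke the Backward phase of the same dynamic program. For each non-trivial component I sample a sorting scenario by tracing the recursion from its top entry downward: at a node governed by a recursion such as $W(i) = \sum_{j=1}^i {2i \choose 2j-1} W(j-1) W(i-j)$, I pick the splitting index $j$ (and the interleaving of the two sub-scenarios) with probability proportional to the corresponding summand ${2i \choose 2j-1} W(j-1) W(i-j)$, then recurse on the two children. By construction each complete sorting scenario of that component is generated with probability exactly the reciprocal of its count, that is, uniformly. Finally I would combine the per-component scenarios by drawing a uniformly random interleaving of their operation sequences — precisely the coloured-permutation-with-repeats construction described just before the theorem — which accounts for the multinomial factor; merging the sampled per-component scenarios along this interleaving produces a global scenario.

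The step I expect to require the most care is verifying that the Backward sampling reproduces the \emph{exact} uniform distribution and not merely an approximate one. This reduces to checking that the branching probability at every recursion node is the ratio of the relevant summand to its total, and that the interleaving is drawn uniformly and independently of the per-component choices; together these guarantee that the product of probabilities along any complete trace equals the reciprocal of the global count of Theorem~\ref{theo:sharpMPSCJ}. Since every branching probability is a ratio of already-computed integers, each sampling step is a single draw from an explicitly given finite distribution and runs in polynomial time, so the sampler is both polynomial and exact, completing the proof.
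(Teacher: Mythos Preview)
Your proposal is correct and follows essentially the same approach as the paper: the Forward phase of the dynamic programming recursions together with the multinomial formula of Theorem~\ref{theo:sharpMPSCJ} give the count, and the Backward phase plus the coloured-permutation interleaving give the exact uniform sampler. Your treatment is in fact more careful than the paper's, which states the theorem as a summary without spelling out the bit-complexity argument or the verification that the Backward trace yields the exact uniform distribution.
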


\subsection{Alternating permutations}

The solutions to $\sharpMPSCJ$ for single components are also linked to the
number of alternating permutations, for which finding a formula is an
old open problem. 
%As a curiosity it can be noticed that the solutions to $\sharpMPSCJ$ for single
%components are related to the number of alternating permutations.
An {\em alternating permutation} of size $n$ is a permutation
$c_1,\dots,c_n$ of $\{1,\dots,n\}$ such that $c_{2i-1}<c_{2i}$ and
$c_{2i}>c_{2i+1}$ for all $i$ \citep{Andre1881}.
For example, if $n=4$, the permutation $1,3,2,4$ is an alternating permutation but
$1,3,4,2$ is not because 3 is less than 4.
The number of alternating permutations of size $n$ is denoted by $A_n$ and
finding these numbers is known as Andr\'e's problem.

We show that computing $\SCJ$ scenarios is closely related:

\begin{theorem}\label{lem:m}
$$M(k) = A_{2k-1}$$ 
$$W(k) = A_{2k+1}$$ 
$$O(k) = A_{2k}$$
$$C(k) = k\times A_{2k-1}$$
\end{theorem}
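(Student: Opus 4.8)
The plan is to identify the three recursions with parity-restricted binomial (exponential) convolutions and read off their values from André's generating function. Recall André's theorem: the exponential generating function of the alternating numbers is $E(x) := \sum_{n\ge 0} A_n \frac{x^n}{n!} = \sec x + \tan x$, whose even and odd parts are the secant and tangent series
\begin{equation}
S(x) := \sec x = \sum_{m\ge 0} A_{2m}\frac{x^{2m}}{(2m)!}, \qquad T(x) := \tan x = \sum_{m\ge 0} A_{2m+1}\frac{x^{2m+1}}{(2m+1)!}.
\end{equation}
The last equation of the theorem needs no work: since $C(i) = i\,W(i-1)$, once $W(k-1)=A_{2k-1}$ is known we get $C(k)=k\,A_{2k-1}$ directly.

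For the other three I would induct on $k$, assuming $M(j)=A_{2j-1}$, $O(j)=A_{2j}$ and $W(j)=A_{2j+1}$ for all smaller indices; the base cases $W(0)=1=A_1$ and $O(0)=1=A_0$ are immediate, and $M(1)=1=A_1$ then follows from its own recursion. The key observation is that, after substituting the inductive hypothesis, each recursion is exactly a parity-restricted binomial convolution. Writing $[x^n/n!]$ for coefficient extraction and using $[x^n/n!](FG)=\sum_k \binom{n}{k}[F]_k[G]_{n-k}$, the $M$-recursion becomes the even$\times$even convolution $[x^{2i-2}/(2i-2)!]\,S^2$, the $O$-recursion becomes the even$\times$odd convolution $[x^{2i-1}/(2i-1)!]\,ST$, and the $W$-recursion becomes the odd$\times$odd convolution $[x^{2i}/(2i)!]\,T^2$. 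In each case the binomial coefficient and the summation range written in the theorem match precisely the nonzero terms of the corresponding product (e.g. in $S^2$ only even indices survive, reproducing $\binom{2i-2}{2j-2}$ and the factors $A_{2j-2}=O(j-1)$, $A_{2i-2j}=O(i-j)$).

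It remains to evaluate the three products, which is where André's differential identities enter:
\begin{equation}
T' = \sec^2 x = S^2, \qquad S' = \sec x\,\tan x = ST, \qquad T^2 = \sec^2 x - 1 = S^2 - 1.
\end{equation}
Since $T' = \sum_{n\ \mathrm{even}} A_{n+1}\frac{x^n}{n!}$ and $S' = \sum_{n\ \mathrm{odd}} A_{n+1}\frac{x^n}{n!}$, these give $[x^n/n!]\,S^2 = A_{n+1}$ for even $n$, $[x^n/n!]\,ST = A_{n+1}$ for odd $n$, and $[x^n/n!]\,T^2 = A_{n+1}$ for even $n\ge 2$. Plugging in $n=2i-2$, $n=2i-1$ and $n=2i$ respectively yields $M(i)=A_{2i-1}$, $O(i)=A_{2i}$ and $W(i)=A_{2i+1}$, closing the induction.

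I expect the only delicate point to be the parity bookkeeping. In particular, for even $n$ one cannot split the full recurrence $2A_{n+1}=\sum_k\binom{n}{k}A_kA_{n-k}$ into two equal halves via the symmetry $k\mapsto n-k$, because that symmetry preserves parity; the honest reason the odd$\times$odd (resp. even$\times$even) sub-sum still equals $A_{n+1}$ is the factorisation $T^2 = S^2-1$ (resp. $S^2 = T'$), so the generating-function route is what keeps the argument short. A purely combinatorial alternative — tracking the position of the largest element of an alternating permutation and matching it to the first cut performed on a path or cycle — should also work, but would demand a more careful case analysis than the three one-line coefficient extractions above.
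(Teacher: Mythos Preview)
Your proof is correct, but it takes a genuinely different route from the paper's. The paper gives a direct bijection and does not touch the recursions for $M$, $W$, $O$ at all in the proof: for an $M$-shaped component with $G_1$-adjacencies $(x_1,x_2),\ldots,(x_{2k-1},x_{2k})$, a scenario must cut each $(x_{2i-1},x_{2i})$ and create each $(x_{2i},x_{2i+1})$; recording the time step of each operation as $\pi_{2i-1}$ (cut) and $\pi_{2i}$ (join), the precedence constraints ``cut before join'' give exactly $\pi_{2i-1}<\pi_{2i}>\pi_{2i+1}$, so $\pi$ is an alternating permutation of length $2k-1$, and conversely every such permutation yields a valid scenario. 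The $W$ and $O$ cases are asserted to be analogous, and $C(k)=k\,W(k-1)$ handles the last line.

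Your approach instead takes the paper's recursions as the starting point and closes them against Andr\'e's secant--tangent identities $T'=S^2$, $S'=ST$, $T^2=S^2-1$. This is more mechanical and treats the three cases uniformly (the paper only spells out the $M$ case), but it relies on Andr\'e's theorem as an external input, whereas the paper's bijection is self-contained and also explains \emph{why} alternating permutations appear. Interestingly, the ``purely combinatorial alternative'' you sketch at the end --- tracking an extreme element --- is still not what the paper does; its bijection is even more direct than that.
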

\begin{proof}
We prove only the first line, the second and the third lines can be
proved the same way. The proof of the last line comes from the fact
that a cycle with $k$ adjacencies can be opened in $k$ different ways
into a W-shaped component with $k-1$ adjacencies.
Let the adjacencies in the $G_1$ part of the $M$-shaped component be
$(x_1,x_2)$, $(x_3,x_4)$, $\ldots (x_{2k-1},x_{2k})$. Any $\SCJ$
scenario sorting these must cut all these adjacencies and must create
adjacencies $(x_2,x_3)$, $(x_4,x_5)$, $\ldots (x_{2k-2},x_{2k-1})$. Let
us index the $\SCJ$ operations in a scenario, and let $\pi_{2i-1}$ be the
index of the $\SCJ$ step which cuts the adjacency $(x_{2i-1},x_{2i})$, and
let $\pi_{2i}$ be the index of the $\SCJ$ step which joins
$x_{2i}$ and $x_{2i+1}$.

In any most parsimonious $\SCJ$ sorting the $M$-shaped component,
$\pi_{2i-1}<\pi_{2i}$ and $\pi_{2i+1}<\pi_{2i}$, so $\pi$ is an alternating
permutation. Hence the number of sorting scenarios is at most $A_{2k-1}$.

On the other hand, for any alternating permutation of size $2k-1$, we can
construct a sorting scenario in which the indexes come from the
alternating permutation. Since the sorting scenarios for different
alternating permutations are different, the number of $\SCJ$ scenarios is
at least $A_{2k-1}$. 
\end{proof}

%\section{Most parsimonious SCJ scenarios on a star tree}
%\label{sec:startree}
%
%Given a star tree with $k$ leaves, labelled with genomes $G_1, G_2,
% \ldots G_k$ A most parsimonious $\SCJ$ scneario on a star tree consists of two
% parts: a mediane genome $G_m$ which minimizes the sum of $\SCJ$ distances
% \begin{equation}
% \sum_{i=1}^k d_{\SCJ}(G_i,G_m)
% \end{equation}
% and a most parsimonious scenario along each edge.

% \begin{itemize}
% \item recall the Meidanis paper that the median genome is unique
% \item we are done...
% \end{itemize}

\section{Counting and sampling $\SCJ$ small parsimony solutions}
\label{sec:arbitrarytree}
% Given a binary tree $T(V,E)$ with $k$ leaves, labelled with genomes $G_1, G_2,
% \ldots G_k$ A most parsimonious $\SCJ$ scneario on this tree consists of two
% parts. The first part is that genomes have to be assigned to the internal nodes which minimizes the sum of $\SCJ$ distances
% \begin{equation}
% \sum_{(i,j) \in E} d_{\SCJ}(G_i,G_j)
% \end{equation}
% The second part is to draw a most parsimonious scenario along each
% edge. The solutio space of the most parsimonious $\SCJ$ scenarios
% consists of all possible cominations of assignments to the internal
% nodes together with the possible $\SCJ$ scenarios on the edges of the
% phylogenetic tree. 

The $\SPSCJ$ problem is in $\P$, since one optimal assignment of
genomes to the internal nodes can be drawn in polynomial running time,
 \citep{fm2011}. However, we show that estimating the size of the solution space,
 as well as uniformly sampling it, is hard.

We show first that there is no polynomial
running time algorithm which samples almost uniformly from the
solutions unless $\RP = \NP$:
\begin{theorem}\label{theo:non-approx}
$\sharpSPSCJ \in \FPAUS \Rightarrow \RP = \NP$.
\end{theorem}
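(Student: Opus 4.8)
The plan is to exhibit an $\NP$-complete problem that an $\FPAUS$ for $\sharpSPSCJ$ would decide in randomized polynomial time with one-sided error, which forces $\RP = \NP$. I would reduce from $\SAT$, or another convenient $\NP$-complete problem. The crucial structural feature to exploit is that an internal assignment must itself be a genome, so at every internal node the chosen set of adjacencies is constrained to be a partial matching on gene extremities (each extremity lies in at most one adjacency). This coupling across characters is what carries the hardness: independent per-character Fitch-style labelings would be easy, but the degree-$2$ constraint lets the optimal internal assignments encode the combinatorial choices of a $\SAT$ instance. This is also why the argument extends to arbitrary discrete characters, as the abstract claims.

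First I would build, from a formula $\phi$, a rooted binary tree $T$ together with leaf genomes whose most parsimonious $\SPSCJ$ cost is a fixed, easily computed value, and such that the optimal internal assignments split into two groups: a \emph{background} family of solutions that is always present, and a family of \emph{witness} solutions that exists if and only if $\phi$ is satisfiable, each witness carrying a polynomial-time-checkable marker (for instance a designated internal node carrying a designated adjacency that no background solution can carry at optimum). The scenario factor from Theorem~\ref{theo:sharpMPSCJ} would be folded into the gadget so that it does not distort which assignments are optimal.

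Given such a reduction, the $\FPAUS$ is used as a detector. Running the sampler with a small $\epsilon$ and drawing polynomially many samples, I would report ``satisfiable'' as soon as a sampled solution shows the marker and ``unsatisfiable'' otherwise. In the unsatisfiable case no solution carries the marker, so the answer is always correct; in the satisfiable case the marker appears under the uniform distribution with some probability $p$, and as long as $p$ stays bounded below by an inverse polynomial, an almost-uniform sample exhibits it with probability at least $p-\epsilon$, which is amplified above $1/2$ by repetition. This yields an $\RP$ algorithm for $\SAT$, hence $\RP = \NP$.

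The main obstacle I anticipate is quantitative: guaranteeing that the witness solutions occupy a non-negligible (at least inverse-polynomial) fraction of the whole solution space. Because each optimal internal labeling is weighted by a product of scenario counts $M(\cdot)$, $W(\cdot)$, $O(\cdot)$, $C(\cdot)$ together with the large multinomial coefficient of Theorem~\ref{theo:sharpMPSCJ}, distinct labelings can contribute wildly different numbers of scenarios, so a lone satisfying assignment could be swamped by exponentially many background scenarios. Controlling this forces the gadget to make witness and background assignments induce comparable component structures, hence comparable scenario multiplicities, so that the marker retains inverse-polynomial probability. Achieving this balance while keeping the optimal cost pinned is where the real work lies.
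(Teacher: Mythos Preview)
Your diagnosis of where the hardness lives is backwards, and this misdirection would derail the construction. You write that ``independent per-character Fitch-style labelings would be easy'' and that the coupling forced by the partial-matching (valid-genome) constraint ``is what carries the hardness.'' The paper's reduction does precisely the opposite: every adjacency introduced in the construction of $T_\Phi$ is chosen to share no extremity with any other, so \emph{any} $0/1$ assignment at every internal node is a valid genome and the matching constraint is vacuous. The abstract and the discussion both stress this: the argument applies verbatim to parsimony on arbitrary independent discrete characters. So the feature you propose to exploit is absent from the actual proof, and the claim that without it the problem is easy is exactly what the theorem refutes.

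What actually carries the hardness is the quantity you treat as a nuisance to be neutralized: the \emph{scenario multiplicities}. The paper engineers, for each clause, a constant-size ``unit subtree'' on which every root assignment that satisfies the clause yields the same number of $\SCJ$ scenarios, and the unique non-satisfying assignment yields strictly fewer, with ratio $\gamma=2^{20}/3^{12}>1$. Repeating that unit subtree polynomially many times (the ``blow-up'') amplifies the ratio so that, over the full tree $T_\Phi$, the total count of scenarios attached to satisfying root assignments exceeds the total attached to all non-satisfying assignments combined. An $\FPAUS$ then returns a satisfying assignment with probability greater than $1/2$ whenever one exists, giving an $\RP$ algorithm for $\SAT$. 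In other words, the paper \emph{exploits} the differential weighting to make witnesses dominate, rather than equalizing it so that a rare marker survives with inverse-polynomial mass. Your plan to balance component structures so that witness and background labelings have comparable scenario counts is working against the grain; even if it succeeded, you would still need a separate gadget to create the witness/background split, and you have not said how the matching constraint would produce one.
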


Then our conjecture is that $\sharpSPSCJ \in \sharpPcomp$, but we can prove only a
slightly weaker result
\begin{theorem}\label{theo:pnp}
$\sharpSPSCJ \in \FP \Rightarrow \P = \NP$.
\end{theorem}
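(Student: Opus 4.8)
The plan is to prove both Theorem~\ref{theo:non-approx} and Theorem~\ref{theo:pnp} together by a single polynomial-time reduction from a known hard problem, most naturally $\sharp\SAT$ (or its decision version $\SAT$), building a phylogenetic tree instance of $\SPSCJ$ whose number of most parsimonious scenarios encodes the number of satisfying assignments of a given $\CNF$ formula. The key observation I would exploit is that the abstract small parsimony problem on discrete characters (mentioned explicitly in the $\SPSCJ$ definition) is the real engine: since a single gene adjacency is just a binary character (present/absent), and since Theorem~\ref{theo:dscj} tells us $d_{\SCJ}$ is exactly the symmetric difference of adjacency sets, the scenario-counting on each edge factors adjacency-by-adjacency. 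So the whole $\SPSCJ$ count splits into (i) the number of most parsimonious \emph{labellings} of the internal nodes, times (ii) for each such labelling, a product over edges of the number of ways to order the independent cuts and joins along that edge --- which by the two-genome analysis (Theorems~\ref{theo:sharpMPSCJ} and~\ref{theo:sampling}) is a clean multinomial-type factor. The heart of the hardness therefore lives in counting optimal labellings of a binary phylogenetic tree, the Fitch/small-parsimony setting.

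\textbf{Step 1.} First I would reduce $\SAT$ to the question of whether a tree has an optimal labelling with a prescribed value at the root (or, more robustly, reduce $\sharp\SAT$ to counting optimal labellings). The standard gadget encodes each variable by a character on which two leaves disagree, forcing a free binary choice at the relevant internal node, while clause gadgets attach leaves whose labels make the ``all literals false'' configuration cost one extra $\SCJ$ over the satisfiable configurations. Calibrating the gadget so that the \emph{global} parsimony score is achieved exactly by the labellings corresponding to satisfying assignments is the delicate part.

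\textbf{Step 2.} Next I would control the per-edge multinomial factors so they do not swamp or distort the count. Because each edge contributes a factor depending only on how many adjacencies it flips, I would design the gadgets so that every \emph{optimal} labelling flips the same multiset of adjacencies along the tree, making the product in~(\ref{eq:numberofSCJ}) a common constant $K$ independent of which satisfying assignment we picked. Then $\sharpSPSCJ = K \cdot (\text{number of optimal labellings}) = K \cdot (\text{number of satisfying assignments})$, up to a controlled polynomial factor. This gives Theorem~\ref{theo:pnp}: an $\FP$ algorithm for $\sharpSPSCJ$ would let us decide satisfiability by checking whether the count exceeds $K$ times the unsatisfiable baseline, so $\P = \NP$.

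\textbf{Step 3.} For Theorem~\ref{theo:non-approx} I would use the gap-amplification idea standard in approximate-counting hardness: an $\FPAUS$ yields, by sampling and checking whether a drawn scenario corresponds to a satisfying assignment, an $\RP$ algorithm for $\SAT$. Concretely, if the formula is satisfiable the solution space contains the ``satisfying'' scenarios with non-negligible probability, so almost-uniform sampling detects one with probability $\ge 1/2$ after polynomially many draws; if it is unsatisfiable there are none, so the test never succeeds. This places $\SAT \in \RP$, forcing $\RP = \NP$. \textbf{The main obstacle} will be Step~1--2: arranging the gadget tree so that satisfying assignments are in clean bijection with \emph{strictly} optimal labellings \emph{and} the edge-ordering factors are uniform across them, so that neither the parsimony cost nor the multinomial weights accidentally separate satisfying assignments into distinguishable weight classes --- which would break both the exact count and the near-uniformity needed for the $\FPAUS$ argument.
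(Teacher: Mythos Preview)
Your high-level framework is the right one and matches the paper: reduce from $\SAT$, use only pairwise-independent adjacencies so that validity is never an issue, and exploit that on each edge the number of scenarios is just the factorial of the number of changes on that edge. Where you diverge from the paper---and where your plan has a real gap---is Step~2.

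You propose to arrange the gadgets so that \emph{every} optimal labelling induces the same multiset of per-edge change-counts, giving a single constant $K$ with $\sharpSPSCJ = K\cdot(\text{number of satisfying assignments})$. The paper does \emph{not} do this, and in fact explicitly flags it as the obstacle that prevents them from proving $\sharpP$-completeness (see the Discussion): in their construction different satisfying assignments carry multiplicative coefficients that vary between roughly $\bigl(((n-3)/2)!\bigr)^2$ and $(n-3)!$, so there is no common $K$. Achieving your uniform-weight condition would be strictly stronger than what is needed for Theorem~\ref{theo:pnp} and is not known to be possible; you correctly name it as the main obstacle, but you give no mechanism for overcoming it.

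The paper sidesteps this entirely by a \emph{gap amplification} (``blowing up'') argument rather than a parsimonious count. For each clause they build a constant-size unit gadget in which the seven satisfying root assignments each yield $2^{156}\cdot 3^{64}$ scenarios and the single non-satisfying assignment yields $2^{136}\cdot 3^{76}$, a fixed ratio $\gamma=2^{20}/3^{12}>1$. They then repeat this unit gadget $\Theta(k\log((n-3)!)+n)$ times so that the ratio $\gamma$ is raised to a power large enough to swamp both the $2^n$ possible non-satisfying assignments and the $(n-3)!$ slack coming from the non-uniform edge factorials. The conclusion is a threshold: if $\Phi$ is satisfiable the total count strictly exceeds an explicit bound, and if not it is strictly below it. An $\FP$ algorithm for $\sharpSPSCJ$ then decides $\SAT$ by comparing against that threshold, giving $\P=\NP$. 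No uniformity of weights across satisfying assignments is required---only that a single satisfying assignment contributes more scenarios than all non-satisfying assignments combined. Your Step~3 for Theorem~\ref{theo:non-approx} is essentially what the paper does, and the same blown-up construction serves both theorems.
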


Stochastic counting ($\FPRAS$) and sampling ($\FPAUS$) are equivalent
for self-reducible problems [\citep{Jerrum1986}, see the quite
technical definition of self-reducibility there]. However the counting
counterpart of Theorem ~\ref{theo:non-approx} cannot be immediately
deduced from it because we miss a proof of self-reducibility for
$\sharpSPSCJ$, which seems far from trivial, even not true in that
case. So we have to prove this counting counterpart independently. 

The construction we use in the proof of Theorem~\ref{theo:non-approx}
shows the hardness of a more specific problem and can be adapted to prove that:
% The $\sharpSPSCJ$ problem in general is not a
% self-resembling counting problem. Even a special subcase is
% not self-resembling, only ``almost self-resembling''. However, it is
% enough for our purposes, and since our construction for proving
% Theorem~\ref{theo:non-approx} falls into this subcase, we can prove
% also that

\begin{theorem}\label{theo:hard_FPRAS}
$\sharpSPSCJ \in \FPRAS \Rightarrow \RP = \NP$.
\end{theorem}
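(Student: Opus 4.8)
The plan is to reuse the reduction built for Theorem~\ref{theo:non-approx} and to extract from it a \emph{multiplicative gap} in the number of optimal solutions, rather than the sampling separation exploited there. Recall that the construction turns a \CNF\ formula $\varphi$ into an $\SPSCJ$ instance (a binary tree with leaf genomes) whose optimal solution space decomposes as $S = S_0 \cup S_1$, where $S_0$ is a fixed ``skeleton'' family present for every $\varphi$, and $S_1$ is a family of additional optimal solutions that is nonempty precisely when $\varphi$ is satisfiable. First I would verify that $|S_0|$ is computable in polynomial time from $\varphi$ alone (it depends only on the types and sizes of the components produced by the gadget, counted via Theorem~\ref{theo:sharpMPSCJ}), and that whenever $\varphi$ is satisfiable one has $|S_1| \ge |S_0|$, so that $\sharpSPSCJ = |S_0|$ when $\varphi$ is unsatisfiable and $\sharpSPSCJ \ge 2|S_0|$ when it is satisfiable. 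This is where the sampling and the counting arguments diverge: for the $\FPAUS$ result it sufficed that $S_1$ be detectable and comparable in size to $S_0$, whereas here I need the exact base value $|S_0|$ in order to fix a usable decision threshold.

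Granting such a construction, I would use the hypothetical $\FPRAS$ as a two-sided decision procedure for $\SAT$. Run it on the instance with the error parameters fixed to constants, say $\epsilon = 1/4$ and $\delta = 1/4$; since $(1+\epsilon)^2 = 25/16 < 2$, the returned estimate $\hat f$ satisfies, with probability at least $3/4$,
\[
\frac{|S_0|}{1+\epsilon} \le \hat f \le (1+\epsilon)\,|S_0| \quad\text{when } \varphi \text{ is unsatisfiable},
\]
and $\hat f \ge 2|S_0|/(1+\epsilon) > (1+\epsilon)|S_0|$ when $\varphi$ is satisfiable. Comparing $\hat f$ against the threshold $\tau = (1+\epsilon)\,|S_0|$ therefore decides satisfiability of $\varphi$ with bounded two-sided error in time polynomial in $|\varphi|$, which places $\SAT$, and hence all of $\NP$, in $\BPP$.

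The remaining issue is that an $\FPRAS$ has two-sided error, so the threshold test above only yields a $\BPP$ decider and not directly an $\RP$ one; the estimate carries no certificate and may overshoot $\tau$ on unsatisfiable inputs. This is resolved by the classical implication $\NP \subseteq \BPP \Rightarrow \RP = \NP$: using the downward self-reducibility of $\SAT$, one fixes the variables one at a time, at each step querying the $\BPP$ decider (its error first amplified to be inverse-polynomially small, so that a union bound over the $\poly(|\varphi|)$ queries keeps the total error below $1/2$), builds a candidate assignment, and finally \emph{verifies} it deterministically. This procedure answers ``satisfiable'' only after exhibiting a genuine assignment, hence never errs on unsatisfiable formulas and succeeds with probability at least $1/2$ on satisfiable ones; thus $\SAT \in \RP$ and $\RP = \NP$. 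Crucially, this route relies on the self-reducibility of $\SAT$, not of $\sharpSPSCJ$, and so sidesteps the missing self-reducibility of the counting problem that prevented deducing the statement directly from Theorem~\ref{theo:non-approx}.

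The step I expect to be the main obstacle is the first one: adapting the reduction of Theorem~\ref{theo:non-approx} so that the unsatisfiable count $|S_0|$ is \emph{exactly} and efficiently computable while a provable constant-factor (here factor $2$) gap is guaranteed in the satisfiable case. Everything downstream --- the threshold test and the $\BPP \to \RP$ amplification --- is then routine.
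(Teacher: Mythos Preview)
Your route is genuinely different from the paper's and, once a small framing issue is fixed, it works. The paper does \emph{not} extract a counting gap directly. Instead it introduces an auxiliary problem $\sharpFSPSCJ$ (Fitch solutions with independent adjacencies and some root values fixed), shows $\sharpSPSCJ \in \FPRAS \Rightarrow \sharpFSPSCJ \in \FPRAS$ by replacing each constrained adjacency by fresh edge-specific adjacencies (Lemma~\ref{lem:2fpras}), then proves $\sharpFSPSCJ \in \FPRAS \Rightarrow \sharpFSPSCJ \in \FPAUS$ via a rejection sampler over a counting tree (Lemma~\ref{lem:almost_self_reducible}), and finally observes that the $\FPAUS$ argument of Theorem~\ref{theo:non-approx} already applies to $\sharpFSPSCJ$. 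So the paper circumvents the missing self-reducibility of $\sharpSPSCJ$ by passing to a restricted problem that \emph{is} almost self-reducible; you circumvent it by invoking the self-reducibility of $\SAT$ in the $\NP\subseteq\BPP\Rightarrow\RP=\NP$ step. Your approach is shorter and avoids the auxiliary problem entirely.

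One correction: your $S_0/S_1$ decomposition and the requirement that $|S_0|$ be \emph{exactly} computable are not how the construction behaves, and you do not need them. The count on an unsatisfiable instance is not a fixed ``skeleton'' value --- it still depends on how many clauses each of the $2^n$ root assignments satisfies. What you actually need are polynomial-time computable numbers $L(n,k)$ and $U(n,k)$ with $U < L$ such that the optimum count is $\le U$ when $\Phi$ is unsatisfiable and $> L$ when $\Phi$ is satisfiable. These already exist in the paper: the bounds in the proof of Theorem~\ref{theo:pnp} give $L/U = \gamma = 2^{20}/3^{12} > 1.97$, and both depend only on $n$ and $k$. Running the hypothetical $\FPRAS$ with, say, $\epsilon = 1/4$ (so $(1+\epsilon)^2 < \gamma$) and comparing $\hat f$ to the threshold $\sqrt{LU}$ then decides $\SAT$ with two-sided error, after which your $\BPP\to\RP$ amplification via self-reducibility of $\SAT$ finishes the job. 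So the ``main obstacle'' you flagged is illusory: no adaptation of the reduction is needed, only the observation that bounds with a constant-factor gap suffice in place of an exact base count.
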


We first recall in the following subsection how to draw one particular solution and then how to build
all possible solutions. Then we show how to generate an $\RP$ algorithm
for $\SAT$ using an $\FPAUS$ algorithm for the $\sharpSPSCJ$
problem. Since $\SAT \in \NPcomp$, this construction proves
Theorem~\ref{theo:non-approx}. This section finishes with proving
Theorems~\ref{theo:pnp}~and~\ref{theo:hard_FPRAS}. 

\subsection{The Fitch and Sankoff solutions}
\label{subsec:fitchsankoff}
Let $\Pi_1, \Pi_2, \ldots, \Pi_k$ denote the adjacency sets of genomes
$G_1, G_2, \ldots G_k$ and let
\begin{equation}
\Pi = \cup_{i=1}^k \Pi_i
\end{equation}

\cite{fm2011} proved that the parsimony score is equal to
the sum of the scores for each particular adjacency $\alpha\in\Pi$. This
can be computed by solving the small parsimony
problem for a discrete character. Although this is mainly textbook
material, we recall the principles of the standard algorithms solving
this problem for one adjacency because some stages will be referred to
in the hardness proof.  
%
 %It consists in coding the absence
%of adjacency $\alpha$ by $0$ and its presence by $1$. At the leaves of the
%tree, $\{0\}$s and $\{1\}$s are assigned according to the pattern of
%presence of $\alpha$ in the corresponding genomes. Then $\{0\}$s and
%$\{1\}$s are assigned to internal nodes of the tree so that the
%number of edges with different assignments at its extremities is
%minimum. 
%
For one adjacency, the small parsimony problem is solved by Fitch's
algorithm \citep{Fitch1971}. 
Its principle is first to assign sets
($\{0\}$, $\{1\}$ or $\{0,1\}$) to every node of the tree, visiting
the nodes of the tree in post-order traversal, ie. first the leaves of
the tree and then the parents of each node. At the leaves of the tree,
$\{0\}$s and $\{1\}$s are assigned according to the pattern of
presence or absence of $\alpha$ in the corresponding genomes. 
Let $B(\alpha, u)$ denote the set assigned to node $u$ regarding
adjacency $\alpha$. Fitch's algorithm applies the recursion 
\begin{equation}
B(\alpha,u) = \begin{cases} B(\alpha, v_1) \cap B(\alpha, v_2)
  &\mbox{if } B(\alpha, v_1) \cap B(\alpha, v_2) \neq \emptyset \\ 
B(\alpha, v_1) \cup B(\alpha, v_2) & \mbox{otherwise} \end{cases}
\end{equation}
where $v_1$ and $v_2$ are the children of $u$. 

\begin{definition}
We say that there is an {\em ambiguity} for an adjacency $\alpha$ at vertex $u$
if $B(\alpha,u) = \{0,1\}$.
\end{definition}

Then starting from the root, the nodes are visited in a pre-order
traversal, and $\{0\}$ or $\{1\}$ is assigned to each node according to
the following rules: If $B(\alpha, \root)$
contains only one element, then it is assigned to the root.
If $B(\alpha, root) = \{0,1\}$, then any of them
can be chosen for the root. Once the number assigned to the root is
fixed, the values are propagated down. Let $F(\alpha, v)$ denote
the singleton set assigned to the node $v$ for adjacency
$\alpha$. Fitch's algorithm applies the recursion:
\begin{equation}
F(\alpha, v) =  \begin{cases} F(\alpha, u) \cap B(\alpha, v)
  &\mbox{if } F(\alpha, u) \cap B(\alpha, v) \neq \emptyset \\ 
B(\alpha, v) & \mbox{otherwise} \end{cases}
\end{equation}
where $v$ is a child of $u$.

$F(\alpha,v)$ then always contains exactly one element. Doing this independently
for all adjacencies does not guarantee that the collection of present adjacencies
at each node is a genome: we call a subset $\Sigma \subseteq \Pi$  a \emph{valid genome} if
there is no couple of adjacencies $\alpha_1,\alpha_2 \in \Sigma$ with a common
extremity. \cite{fm2011} showed that if the assignments of $F(\alpha, root)$ over all
possible adjacencies $\alpha \in \Pi$ are chosen to be a valid genome,
then all genomes at the internal nodes are also valid (deduced from Lemmas
6.1. and 6.2 in \cite{fm2011}). They also proved that at least one
valid assignment exists since the Fitch's algorithm never gives
non-ambiguous values for adjacencies sharing extremities.

We call {\em Fitch solutions} the genome assignments constructed this way. However, they are not
the only possible most parsimonious genome assignments. Some of them cannot be found
by Fitch's algorithm.
All solutions can be found by a generalization of Fitch's algorithm, Sankoff's algorithm \citep{sr1975}.
It is a dynamic programming principle which computes two values for each node of the phylogenetic tree:
for a leaf $v_i$ assigned with genome $G_i$,
\begin{equation}
s1(\alpha, v_i) = \begin{cases} 0
  &\mbox{if } \alpha \in \Pi_i \\ 
\infty & \mbox{otherwise} \end{cases}
\end{equation}
\begin{equation}
s0(\alpha, v_i) = \begin{cases} 0
  &\mbox{if } \alpha \notin \Pi_i \\ 
\infty & \mbox{otherwise} \end{cases}
\end{equation}
and for an internal node $u$ with children $v_1$ and $v_2$:
\begin{eqnarray}
s1(\alpha, u) & = & \min\{s1(\alpha,v_1), s0(\alpha, v_1) + 1\} +
\nonumber \\ && \min\{s1(\alpha,v_2), s0(\alpha, v_2) + 1\}\label{eq:sankoff1}  \\
s0(\alpha, u) & = & \min\{s0(\alpha,v_1), s1(\alpha, v_1) + 1\} +
\nonumber \\
&& \min\{s0(\alpha,v_2), s1(\alpha, v_2) + 1\} \label{eq:sankoff2}
\end{eqnarray}
The value of $s0(\alpha,u)$ (respectively $s1(\alpha,u)$) represents the minimum number of edges
under the subtree rooted at $u$ which are labelled with different
presence/absence of $\alpha$ at their two ends in a most parsimonious
scenario, given that $u$ is labelled with the absence (respectively presence) of
$\alpha$.
% Similarly, $s1(\alpha,u)$ represents the minimum number of edges
%under the subtree rooted at $u$ which are labeled with different
%presence/absence of $\alpha$ at their two ends in a most parsimonious
%scenario, given that $u$ is labeled with the presence of
%$\alpha$.
Then $\min(s1(\alpha,root),s0(\alpha,root))$ is the minimum small parsimony
solution for adjacency $\alpha$, and the assignments to internal nodes are obtained by
propagating down the values based on which gave the minimum in
Equations~\ref{eq:sankoff1}~and~\ref{eq:sankoff2}.

Contrary to Fitch's algorithm, this one explores all possible most
parsimonious assignments for a given adjacency
\citep{esz1994}. Unfortunately, in that case there is no guarantee that
all of these assignments give valid genomes, as Feijao and Meidanis
result holds only for Fitch's solutions. 
%
% Contrary to Fitch's algorithm, this explores all possible most parsimonious assignments for a given
% adjacency \citep{esz1994}. Unfortunately in that case there is no guarantee
% that these assignments give valid genomes. It is easy to generate a
% counterexample when the Sankoff's solutions lead to an invalid genome
% at the internal node, but this is out of scope of this paper, and such
% an example is not presented here.
% There are examples where some of the assignments yield invalid genomes even if
% the assignment at the root is valid.
It is an open question how to estimate the number of most parsimonious genome assignments (we
can call them the {\em Sankoff solutions}), and is beyond the scope of this paper (note that
it is a different problem from $\sharpSPSCJ$ where we aim at estimating the number of
$\SCJ$ scenarios and not only genome assignments).

% \begin{itemize}
% \item introduce the Fitch algorithm
% \item introduce the Sankoff algorithm
% \item show that there are solutions not available with the Fitch
%   algorithm
% \item the Sankoff solutions might be in conflict
% \end{itemize}

% \subsection{Gibbs sampler-based MCMC}
% \label{subsec:mcmc}

% \begin{itemize}
% \item describe the Gibbs sampler using Forward-Backward sampling
% \item prove irreducibility
% \item mention that it is not always rapidly mixing
% \end{itemize}

\subsection{Sampling most parsimonious $\SPSCJ$ scenarios is hard}
\label{subsec:hardness}

In this section we construct a problem instance $x \in \SPSCJ$ for any
$\CNF$ formula $\Phi$ with $n$ variables, such that 
if there exists an $\FPAUS$ for $x$ then it is an $\RP$ algorithm for
deciding whether or not $\Phi$ is satisfiable.

%%%%%%%%%%%%%%%%%%%
Let $\Phi$ be a $\CNF$ with $n$ logical variables and $k$ clauses. We
are going to construct a tree denoted by $T_\Phi$, and label its
leaves with genomes. For each logical
variable $b_i$ we create an adjacency $\alpha_i$. In this construction, all
adjacencies are independent one from another, namely they never
share common extremities. So there is no genome validity issue in this
construction, any assignment of adjacency presence/absence is a valid genome.

For each clause $c_j$, we construct a subtree $T_{c_j}$. The construction is
done in three phases, see also
Figure~\ref{fig:Tc}. First, we create a constant size subtree, called
unit subtree using building blocks we call elementary subtrees. Then
in the blowing up phase, this unit subtree is repeated several times,
and in the third phase it is amended with another constant size
subtree. The reason for this construction is the following: the unit
subtree is constructed in such a way that if a clause is satisfied,
the number of $\SCJ$ solutions is a greater number, and is always the
same number not depending on how many literals provide satisfaction of
the clause. When the clause is not satisfied, the number of $\SCJ$ 
solutions is a smaller number. The blowing up is necessary for
sufficiently separating the number of solutions for satisfying and
not satisfying assignments. Finally, the amending is necessary for
having all adjacencies ambiguous in the Fitch solutions.

\begin{figure}[tpb]%figure1
\centerline{\includegraphics[scale=0.6]{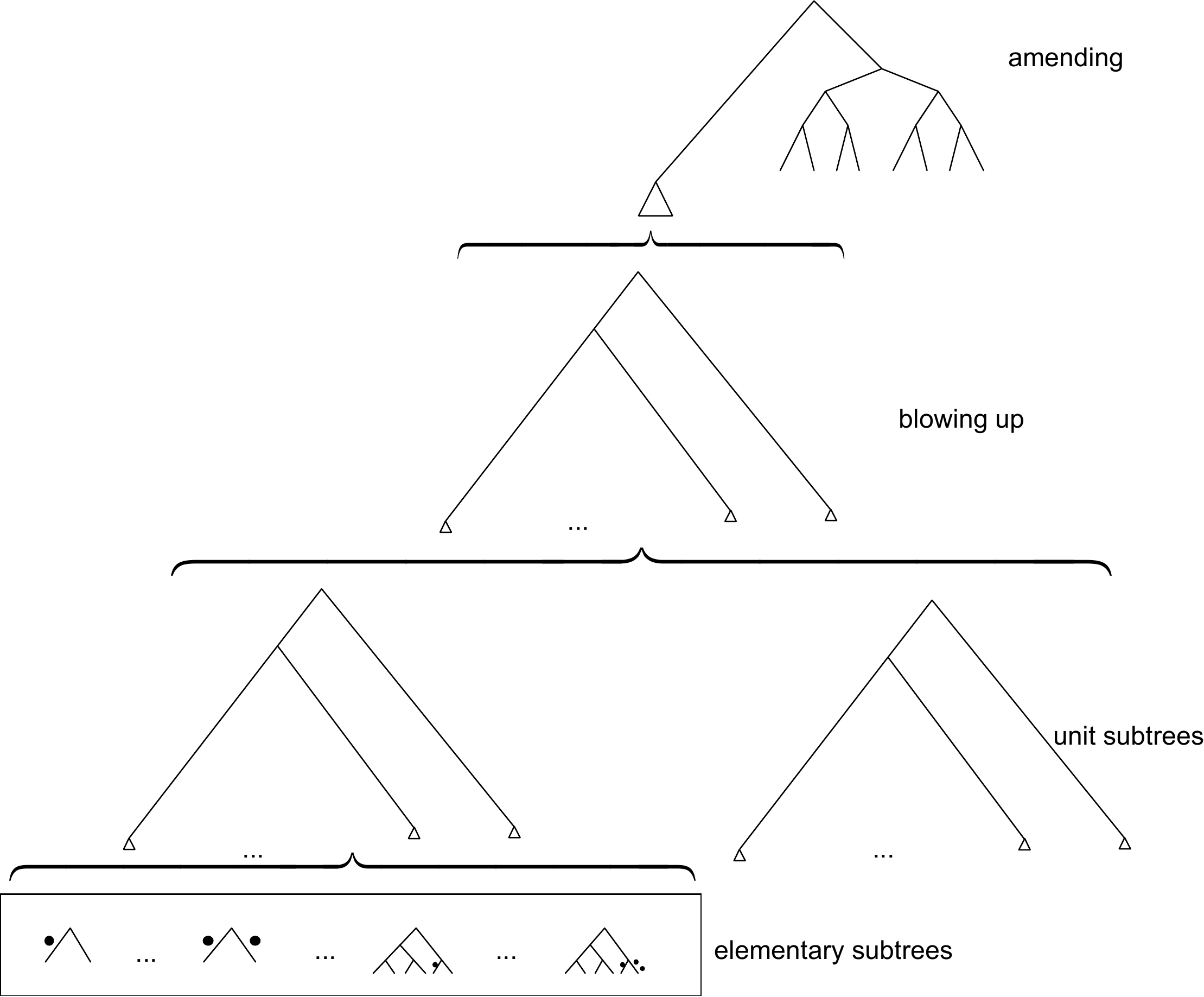}}
\caption{Constructing a subtree $T_{c_j}$ for a clause $c_j$. The subtree
  is built in three phases. First, elementary subtrees are connected
  with a comb to get a unit subtree. In the second phase the same unit
  subtree is repeated several times 'blowing up' the tree. In the
  third phase, the blown up tree is amended with a constant size,
  depth 3 fully balanced tree. The smaller subtrees constructed in the
  previous phase are denoted with a triangle in the next phase. See also text for details.}\label{fig:Tc}
\end{figure}

We detail the construction of the subtree for the clause
$c_j = b_1 \vee b_2\vee b_3$, denoted by
$T_{c_j}$. Subtrees for the other kinds of clauses are 
constructed similarly. 
The unit subtree is built from $76$ smaller subtrees that we will call
\emph{elementary subtrees}. Only $14$ different types of elementary
subtrees are in a unit subtree, but several of them have given
multiplicity, and the total count of them is $76$, see also
Table~\ref{tab:solutions}. Some of the elementary
subtrees are cherry motives for which we arbitrarily identify a left
and a right leaf. On some of these cherries, we add one or more
adjacencies, called extra adjacencies, 
which are present exactly on one leaf of the cherry and absent
everywhere else in $T_\Phi$.
So the edges connecting these leaves to the rest of the entire tree $T_\Phi$ will
contain one or more additional $\SCJ$ operations in all
most parsimonious solutions.

A clause contains $3$
logical variables, the unit subtree will be such that for the corresponding
adjacencies, Fitch's algorithm assigns an ambiguity at the root of the
subtree $T_{c_j}$, namely
\begin{equation}
B(\alpha_i,\root) = \{0,1\}
\end{equation}
for each $b_i \in c_j$. The entire tree, $T_\Phi$, will also be such that Sankoff solutions
are all found by Fitch's algorithm, namely, all solutions can be found
by the Fitch's algorithm, as we are going to state and prove in
Lemma~\ref{lem:onlyFitch}. Therefore there will be $8$ possible
genome assignments for
the unit subtree, related to the $8$ possible assignments of the three logical
variables at the root. Let the presence of the adjacency at the root mean logical true value,
and let absence mean logical false value. The constructed unit subtree will
be such that if the clause is not
satisfied, the number of possible $\SCJ$ scenarios for the corresponding
assignment on this unit subtree is $2^{136} \times 3^{76}$, and if the clause is
satisfied, then the number of possible $\SCJ$ scenarios for each
corresponding assignment is $2^{156} \times 3^{64}$. The ratio of the two
numbers is $2^{20}/3^{12} > 1$. We will denote this number by $\gamma$.
This ratio will be the basis for our proof: any $\FPAUS$
will sample the solutions corresponding to the satisfied clauses more often than the
non-satisfied ones because the former are more numerous. This
can be turned into an $\RP$ algorithm for $\SAT$.

Below we detail the construction of the elementary subtrees and also
give the number of $\SCJ$ solutions on them since the number of
solutions on the unit subtree is simply the product of these numbers.

For the adjacencies $\alpha_1$, $\alpha_2$ and $\alpha_3$, the
cherries are the following:
\begin{itemize}
\item for the cherries on which the left leaf contain one extra
  adjacency, the presence/absence pattern on the left and right leaf is given by\\

\noindent $011$, $100$ \\
\noindent $101$, $010$ \\
\noindent $110$, $001$ \\
\noindent $000$, $111$ \\

The first column shows the presence/absence of the three adjacencies
on the left leaf, the second column shows the presence/absence of the
three adjacencies on the right leaf. Hence, for example,
$000$ means that none of the adjacencies is present, $100$ means
that only the first adjacency is present.
The number of $\SCJ$ solutions on one cherry is $24$ if the assignment of
adjacencies at the root of the cherry is the same as on the right
leaf. Indeed, in that case, $4$ $\SCJ$ operations are necessary on the
left edge, and they can be performed in any order. If the number
of $\SCJ$ operations are $3$ and $1$ respectively on the left and
right edges, or \emph{vica versa}
the number of solutions is $6$. Finally, if both edges have $2$ $\SCJ$ operations,
then the number of solutions is $4$.

\item There is one cherry without any extra adjacency, and its
  presence/absence pattern is\\

\noindent $000$, $111$\\

If the clause is not satisfied, the number of $\SCJ$ solutions on this
cherry is $6$; if all logical values are true, the number of $\SCJ$ solutions is still $6$;
in any other case, the number of $\SCJ$ solutions is $2$. 

This elementary subtree is repeated 3 times.

\item Finally, there are cherries with one-one extra adjacency on both
  leaves. These are two different adjacencies, so both of them need
  one extra $\SCJ$ operation on their incoming edge. The
  presence/absence patterns are\\

\noindent $011$, $100$\\
\noindent $101$, $010$\\
\noindent $110$, $001$\\

If all $\SCJ$ operations due to $\alpha_i$, $i=1,2,3$ falls onto one edge,
then the number of solutions is $24$, otherwise the number of
solutions is $12$.

Each of these elementary subtrees are repeated $15$ times.

\end{itemize}

The remaining elementary subtrees contain $3$ cherries connected
with a comb, that is, a completely unbalanced tree, see also Figure \ref{fig:3x2tree}.
For the cherry at the right end of this elementary subtree, we add one or more adjacencies that are
present on one of the leaves and absent everywhere else in $T_\Phi$. When there is
one extra adjacency on the left leaf, the adjacencies
$\alpha_1$, $\alpha_2$ and $\alpha_3$ are assigned with the following
presences/absences on the three cherries at the top of the three combs: \\

\noindent $011$, $000$ \\
\noindent $101$, $000$ \\
\noindent $110$, $000$ \\

Again, the first column shows the assignment for the left leaf, the second column
for the right leaf. The number of $\SCJ$ solutions is $6$ on this cherry if the assignment
at the root is $0$ for both adjacencies which has assignment $1$ on the
left leaf. In any other cases, the number of solutions is $2$. Two of
the adjacencies are ambiguous on this cherry, and the third one is~$0$.
On the remaining two cherries of this elementary subtree, this third adjacency
is present on all leaves, while the other two are made ambiguous in such a way that
any assignment has one $\SCJ$ scenario on the remaining of the tree. We show the solution for the
first subtree on Figure~\ref{fig:3x2tree}.

Each of these elementary subtrees are repeated $3$ times.

Finally, there are elementary subtrees when there is $1$ extra adjacency on the left
leaf and $2$ extra adjacencies on the right leaf. The assignments are\\

\noindent $011$, $000$ \\
\noindent $101$, $000$ \\
\noindent $110$, $000$ \\

The number of $\SCJ$ solutions is $24$ on this cherry if both
necessary $\SCJ$ operations fall onto the edge having $2$ additional
$\SCJ$ operations
due to the extra adjacencies, and $12$ in all other cases.

Each of these elementary subtrees are repeated $5$ times.

\begin{figure}[tpb]%figure1
\centerline{\includegraphics[scale=0.6]{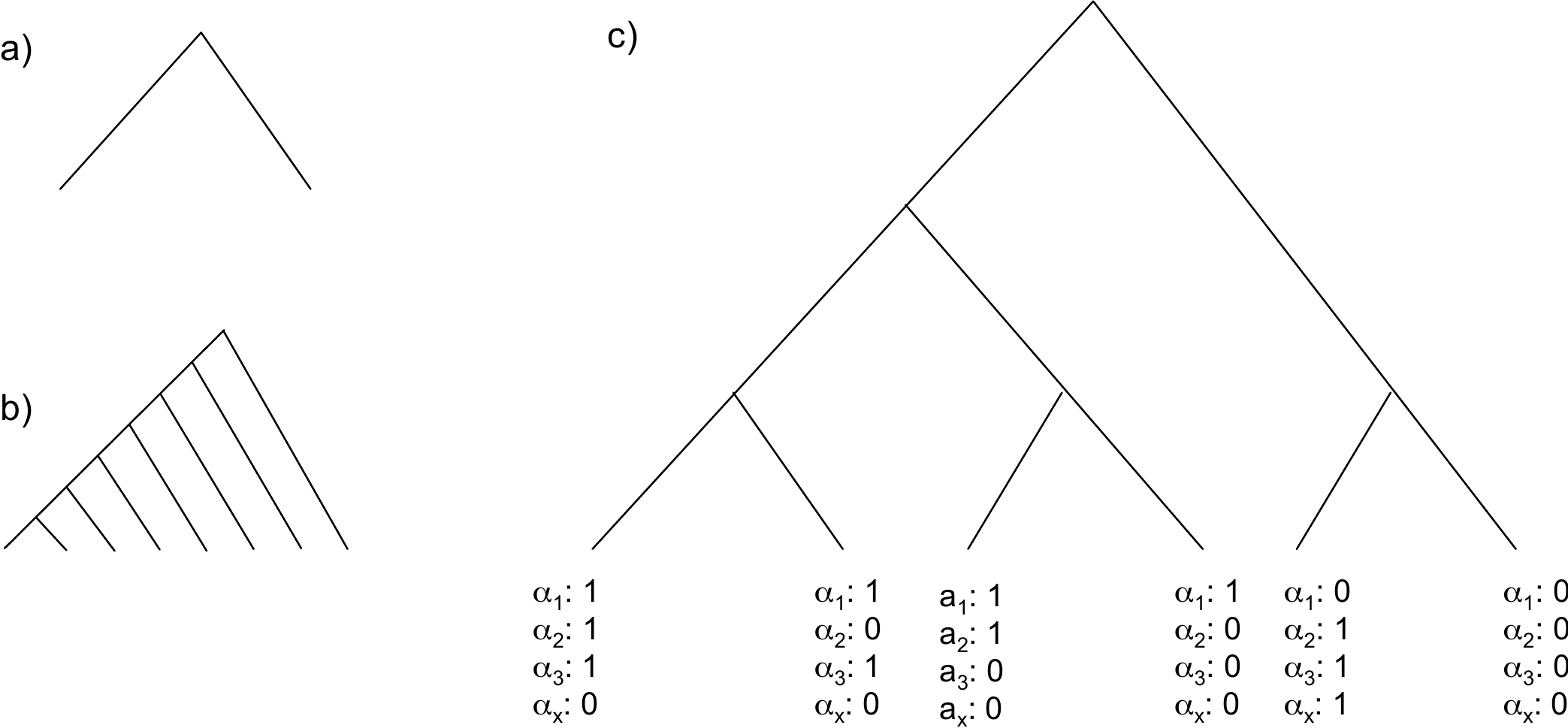}}
\caption{{\bf a)} A cherry motif, ie., two leaves connected with an
  internal node. {\bf b)} A comb, ie., a fully unbalanced tree with $8$
  leaves. {\bf c)} A tree with $3$ cherry motifs connected with a comb. The
  assignments for $4$ adjacencies, $\alpha_1$, $\alpha_2$, $\alpha_3$ and $\alpha_x$ are
  shown at the bottom for each leaf. $\alpha_i$, $i=1,2,3$ are the
  adjacencies related to the logical variables $b_i$, and $\alpha_x$ is an
  extra adjacency. Note that Fitch's algorithm gives ambiguity for all
  adjacencies $\alpha_i$ at the root of this subtree.}\label{fig:3x2tree}
\end{figure}

\begin{table}
{\tiny
\begin{tabular}{c||c|c|c|c|c|c|c|c|c|c|c|c|c|c}
        ~    & \includegraphics[scale=0.1]{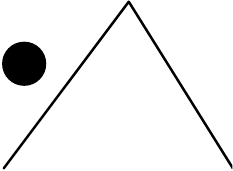} & 
                  \includegraphics[scale=0.1]{cherryandscj.pdf} &
                  \includegraphics[scale=0.1]{cherryandscj.pdf} &
                  \includegraphics[scale=0.1]{cherryandscj.pdf}  &
                   \includegraphics[scale=0.1]{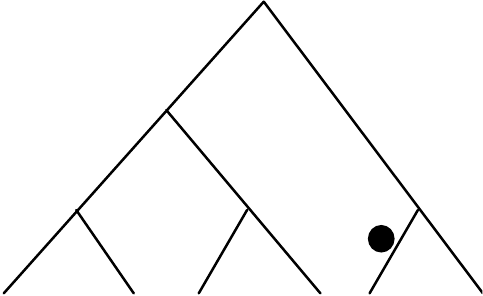}  & 
                   \includegraphics[scale=0.1]{comb.pdf}  & 
                   \includegraphics[scale=0.1]{comb.pdf}  & 
                  \includegraphics[scale=0.1]{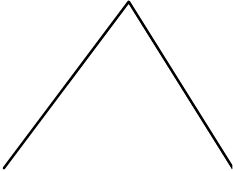}  & 
                   \includegraphics[scale=0.1]{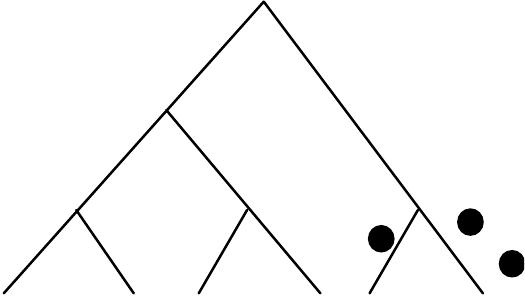}  & 
                   \includegraphics[scale=0.1]{3x2tree3dots.pdf}  &
                   \includegraphics[scale=0.1]{3x2tree3dots.pdf}  &
                  \includegraphics[scale=0.1]{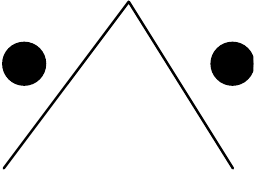}  &
                  \includegraphics[scale=0.1]{cherry2dots.pdf}  &
                  \includegraphics[scale=0.1]{cherry2dots.pdf}  \\

        ~    & 011 & 101 & 110 & 000 & 011 & 101 & 110 & 000 & 011 & 101 & 110 & 011 & 101 & 110\\ \hline
       \#   & 1 & 1 & 1 & 1 & 3 & 3 & 3 & 3 & 5 & 5 & 5 & 15 &15 &15\\ \hline
        000 &  6  &   6  &   6 &   6 & $6^{3}$ & $6^{3}$ & $6^{3}$ & $6^{3}$ & $12^{5}$ & $12^{5}$ & $12^{5}$ & $12^{15}$ & $12^{15}$ & $12^{15}$ \\  
        100 & 24 &   4  &   4 &   4 & $6^{3}$ & $2^{3}$ & $2^{3}$ & $2^{3}$ & $12^{5}$ & $12^{5}$ & $12^{5}$ & $24^{15}$ & $12^{15}$ & $12^{15}$ \\  
        010 &   4 & 24  &   4 &   4 & $2^{3}$ & $6^{3}$ & $2^{3}$ & $2^{3}$ & $12^{5}$ & $12^{5}$ & $12^{5}$ & $12^{15}$ & $24^{15}$ & $12^{15}$ \\  
        110 &   6 &   6  &   6 &   6 & $2^{3}$ & $2^{3}$ & $2^{3}$ & $2^{3}$ & $12^{5}$ & $12^{5}$ & $24^{5}$ & $12^{15}$ & $12^{15}$ & $24^{15}$ \\  
        001 &   4 &   4  & 24 &   4 & $2^{3}$ & $2^{3}$ & $6^{3}$ & $2^{3}$ & $12^{5}$ & $12^{5}$ & $12^{5}$ & $12^{15}$ & $12^{15}$ & $24^{15}$ \\  
        101 &   6 &   6  &   6 &   6 & $2^{3}$ & $2^{3}$ & $2^{3}$ & $2^{3}$ & $12^{5}$ & $24^{5}$ & $12^{5}$ & $12^{15}$ & $24^{15}$ & $12^{15}$ \\  
        011 &   6 &   6  &   6 &   6 & $2^{3}$ & $2^{3}$ & $2^{3}$ & $2^{3}$ & $24^{5}$ & $12^{5}$ & $12^{5}$ & $24^{15}$ & $12^{15}$ & $12^{15}$ \\  
        111 &   4 &   4  &   4 & 24 & $2^{3}$ & $2^{3}$ & $2^{3}$ & $6^{3}$ & $24^{5}$ & $24^{5}$ & $24^{5}$ & $12^{15}$ & $12^{15}$ & $12^{15}$  
\end{tabular}
}
% \begin{tabnote}
% \Note{Source:}{This is a table
% sourcenote. This is a table sourcenote. This is a table
% sourcenote.}
% \vskip2pt
% \Note{Note:}{This is a table footnote.}
% \tabnoteentry{The number of $\SCJ$ scenarios on different components of the
% tree for clause $b_1 \vee b_2 \vee b_3$. Columns represent the $11$
% components, the topology of the tree is indicated on the top. The
% black dot means an extra $\SCJ$ operation on the indicated edge, the
% numbers represent the presence/absence of adjacencies on a particular
% leaf, see text for details. Rows represent the logical true/false
% values of $b_i$s, for example, $001$ means $b_1 = false$, $b_2=false$,
% $b_3=true$. It is easy to check that the product of the numbers in the
% first line is $6^8 \times4^3$ and in any other lines is $24  \times
% 6^4 \times 4^6$.}
% \end{tabnote}
% \Note{Source:}{This is a table
% sourcenote. This is a table sourcenote. This is a table
% sourcenote.}
% \vskip2pt
% \Note{Note:}{This is a table footnote.}
\caption{The number of $\SCJ$ scenarios on different
  elementary subtrees of the unit subtree of the
subtree $T_{c_j}$ for clause $c_j = b_1 \vee b_2 \vee b_3$. Columns represent the $14$
different types of components, the topology of the elementary subtree is indicated
on the top. The black dot means extra $\SCJ$ operations on the indicated edge, the
numbers represent the presence/absence of adjacencies on the left leaf
of a particular cherry, see text for details. The row starting with \# indicates the
number of repeats of the elementary subtrees.  Further rows represent the logical true/false
values of $b_i$s, for example, $001$ means $b_1 = false$, $b_2=false$,
$b_3=true$. The values in the table indicate the number of solutions,
raised to the appropriate power due to multiplicity of the
elementary subtrees. It is easy to check that the product of the numbers in the 
first line is $2^{136} \times 3^{76}$ and in any other lines is $2^{156}  \times
3^{64}$.}
\label{tab:solutions}
\end{table}

In this way, the roots of all 76 elementary subtrees are ambiguous for the three
adjacencies related to logical variables. We connect the 76 elementary
subtrees with a comb, and thus, all three
adjacencies are ambiguous at the root of the entire subtree, which is
the unit subtree. If the
clause is satisfied, the number of SCJ scenarios for the corresponding
assignment is $2^{156}  \times 3^{64}$, if the clause is not satisfied, the
number of SCJ solutions is $2^{136} \times 3^{76}$, as can be checked
on Table~\ref{tab:solutions}. The ratio of them is indeed 
$2^{20}/3^{12} = \gamma$. The number of leaves on this unit subtree is
$248$, and $148$ additional adjacencies are introduced.

This was the construction of the constant size unit subtree.
In the next step, we ``blow up'' the system. Similar blowing up can be
found in \cite{Jerrum1986}, in the proof of Theorem 5.1. We repeat the above described
unit subtree $\left\lceil(k \log((n-3)!) + n \log(2)) / \log(\gamma)\right\rceil +1$ times, and 
connect all of them with a comb (completely unbalanced tree). All three adjacencies representing the three
logical variables in the clause are still ambiguous at the root of
this blown up subtree, and thus, there are still $8$ Fitch solutions.
For a solution satisfying the clause, the number of $\SCJ$ scenarios
on this blown up subtree
is
\begin{equation}
X = \left(2^{156} \times 3^{64}\right)^{\left\lceil\frac{k \log((n-3)!) + n
    \log(2)}{\log(\gamma)}\right\rceil +1}
\end{equation}
and the number of scenarios if the clause is not satisfied is
\begin{equation}
Y = \left(2^{136} \times 3^{76}\right)^{\left\lceil\frac{k \log((n-3)!) + n
    \log(2)}{\log(\gamma)}\right\rceil +1}
\end{equation}
Let all adjacencies
not participating in the clause be $0$ on this blown up subtree.

We are close to the final subtree $T_{c_j}$ for one clause, $c_j$. In the
third phase, we amend the so far obtained tree with a constant size
subtree. Construct a fully balanced 
depth $3$ binary tree, on which all $3$
adjacencies which are in the clause are ambiguous at the root without making more
than $1$ $\SCJ$ scenario on it, similarly to the left part of the tree
on Figure~\ref{fig:3x2tree}. All other adjacencies not participating in
the clause are present at all leaves of this tree.

Here is how to construct $T_{c_j}$ for one clause, $c_j$. Construct an additional vertex which
will be its root. The left child of the root is the blown up tree, while its right child is the depth $3$
balanced tree. Denote by $T_{c_j}$ this final tree for one clause~$c_j$.

All adjacencies are ambiguous at the root of the subtree $T_{c_j}$, therefore there are $2^n$ Fitch
solutions for the assignments of the internal nodes of $T_{c_j}$.

\begin{lemma}
For any assignment of the $n$ adjacencies, if the clause $c_i$ is
satisfied, then the number of $\SCJ$ scenarios for the corresponding
assignment on $T_{c_j}$ is at least 
\begin{equation}
Y \times ((n-3)!)^k \times 2^n \times \gamma
\end{equation}
and at most
\begin{equation}
Y \times ((n-3)!)^{k+1} \times 2^n \times \gamma
\end{equation}
If the clause is not satisfied, then the number of
$\SCJ$ scenarios is at most $Y \times (n-3)!$.
\end{lemma}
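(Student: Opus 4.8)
The plan is to count the most parsimonious scenarios on $T_{c_j}$ for a \emph{fixed} labelling of the $n$ adjacencies at its root, exploiting that in this construction all adjacencies are independent (they never share extremities), so the scenario count factorises over the structural pieces of $T_{c_j}$: the blown up subtree (the left child of the root), the depth $3$ balanced subtree (the right child), and the two edges leaving the root. First I would record the contribution of the three clause adjacencies $\alpha_1,\alpha_2,\alpha_3$. Through the blown up subtree they contribute exactly $X$ scenarios when $c_j$ is satisfied and exactly $Y$ when it is not, and crucially this is the \emph{same} number for every satisfying truth value of the three variables; this is precisely the content already read off from Table~\ref{tab:solutions} for the unit subtree and preserved verbatim under the blowing up. The balanced subtree contributes a single scenario for these adjacencies. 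On the two root edges the clause adjacencies trigger no operation, because the roots of both children are ambiguous for $\alpha_1,\alpha_2,\alpha_3$ and can, and in any most parsimonious scenario do, copy the fixed value at the root of $T_{c_j}$.

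The next step is to account for the $n-3$ adjacencies not occurring in $c_j$. By construction each such adjacency is absent throughout the blown up subtree and present throughout the balanced subtree, so it is constant inside each child and induces no operation there; relative to the fixed root labelling it is therefore forced to change on exactly one of the two root edges --- towards the blown up subtree if it is present at the root, and towards the balanced subtree if it is absent. Writing $p$ for the number of these non-clause adjacencies that are absent at the root, the forced single operations on the two root edges may be ordered independently, contributing a factor $p!\,(n-3-p)!$. This factor equals $1$ when all operations fall on one edge and is maximised at $(n-3)!$, hence in every case lies in the range $[1,(n-3)!]$.

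Multiplying the three contributions, the total number of scenarios for the fixed assignment equals $X\cdot p!\,(n-3-p)!$ if $c_j$ is satisfied and $Y\cdot p!\,(n-3-p)!$ otherwise. In the unsatisfied case this is at most $Y\,(n-3)!$, which is the claimed bound. In the satisfied case I would finally substitute the estimate on $X$ coming from the number of repeats used in the blowing up: since this number is $\lceil (k\log((n-3)!)+n\log 2)/\log\gamma\rceil+1$, we have $X/Y=\gamma^{\lceil (k\log((n-3)!)+n\log 2)/\log\gamma\rceil+1}$, and the ceiling pins the exponent so that $X$ is essentially $Y\,\gamma\,2^{n}\,((n-3)!)^{k}$, up to the bounded slack introduced by the fractional part. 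Combining this control of $X$ with the range $[1,(n-3)!]$ of the edge-ordering factor yields the claimed lower bound $Y\times((n-3)!)^{k}\times 2^{n}\times\gamma$ and upper bound $Y\times((n-3)!)^{k+1}\times 2^{n}\times\gamma$.

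The main obstacle, and the step deserving the most care, is the factorisation in the first two paragraphs: one must argue that a most parsimonious scenario on $T_{c_j}$ decomposes as an \emph{independent} choice of a scenario on each child together with an arbitrary interleaving of the forced single operations on the two root edges, and in particular that no clause adjacency ever changes on a root edge (because matching the ambiguous child root is strictly cheaper than not) and that the non-clause adjacencies contribute nothing beyond the $p!\,(n-3-p)!$ orderings. Everything after that is the arithmetic of inserting the window for $X$ determined by the ceiling in the blowing-up exponent, which is exactly why that exponent was chosen with its $n\log 2$ and $k\log((n-3)!)$ terms.
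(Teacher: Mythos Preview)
Your argument is essentially the paper's own proof: decompose $T_{c_j}$ into the blown up subtree (contributing $X$ or $Y$), the balanced subtree (contributing $1$), and the two root edges carrying the $n-3$ non-clause operations, then bound the latter factor between $1$ and $(n-3)!$ and substitute $X/Y \approx ((n-3)!)^k\,2^n\,\gamma$. One small slip: you write that $p!\,(n-3-p)!$ ``equals $1$ when all operations fall on one edge and is maximised at $(n-3)!$,'' but it is the other way around --- when $p=0$ or $p=n-3$ the factor is $(n-3)!$, and the minimum (at $p\approx(n-3)/2$) is $\bigl(\lfloor(n-3)/2\rfloor!\bigr)\bigl(\lceil(n-3)/2\rceil!\bigr)\ge 1$; the paper notes this explicitly. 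The range $[1,(n-3)!]$ you use is nonetheless correct, so the slip is harmless. Your caution about the ceiling slack in $X/Y$ is in fact more careful than the paper, which simply asserts $X/Y=((n-3)!)^k\times 2^n\times\gamma$; in both treatments the stated upper bound in the satisfied case is slightly loose by at most a factor of $\gamma$, but this bound is not used downstream.
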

\begin{proof}
The $B$ values of Fitch's algorithm for the $n-3$ adjacencies not
representing a logical value in the clause $c_i$ are all $\{0\}$ at all
the nodes of
the left child of the root, and all $\{1\}$ at all the nodes of the right child of the
root. Therefore in all scenarios there are $n-3$ cumulated $\SCJ$ operations on the two edges
going out of the root. If they are all on one of the edges,
the number of possible $\SCJ$ scenarios is $(n-3)!$, and in all other
cases they are less, but at least $1$. (Actually, the minimum is
$\left(((n-3)/2)!\right)^2$, but the very loose lower bound $1$ is
sufficent for our calculations).
Then if the clause is satisfied, the number of $\SCJ$ scenarios is
between $X$ and $X\times (n-3)!$. Note that
$$X/Y = ((n-3)!)^k\times 2^n\times \gamma,$$
which gives the stated result.
If the clause is not satisfied, the number of $\SCJ$ scenarios is at most $Y \times (n-3)!$.
\end{proof}

For all $k$ clauses, construct such a subtree and connect all of them
with a comb. This is the final tree $T_{\Phi}$ for the $\CNF$ $\Phi$.

All adjacencies corresponding to logical variables
are ambiguous at the root of the $T_\Phi$, so there are
$2^n$ Fitch solutions. We prove that there is the same number
of Sankoff solutions.

\begin{lemma}\label{lem:onlyFitch}
All adjacency assignments for the $\SPSCJ$ problem on tree $T_{\Phi}$
are Fitch solutions.
\end{lemma}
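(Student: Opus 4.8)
The plan is to reduce the statement to a node-local, per-adjacency verification and then settle it by a structural analysis following the way $T_\Phi$ is assembled from combs and elementary subtrees. Because the construction guarantees that no two adjacencies share an extremity, every assignment of presence/absence is a valid genome and the parsimony score decomposes as a sum over adjacencies (the Feijao--Meidanis decomposition recalled above). Consequently both the set of Sankoff solutions and the set of Fitch solutions factor as products over the individual adjacencies, and it suffices to prove that for each single adjacency $\alpha$ every most parsimonious labelling of $T_\Phi$ is produced by Fitch's algorithm.

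First I would record the standard two-state identity $B(\alpha,u)=\arg\min_{x\in\{0,1\}} s_x(\alpha,u)$, proved by a one-line induction comparing the $B$-recursion with the Sankoff recursions~(\ref{eq:sankoff1})--(\ref{eq:sankoff2}) (edge costs are integers, so the two children fall into the equal/strictly-smaller cases that match the intersection/union rule). With this in hand, a value $b$ is attainable at a non-root node $u$, given that its parent $p$ is attainable with value $a$, exactly when $b\in\arg\min_c\{s_c(\alpha,u)+\mathbb{1}(c\neq a)\}$, whereas Fitch's down-pass assigns $u$ the value $a$ if $a\in B(\alpha,u)$ and the unique element of $B(\alpha,u)$ otherwise. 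Comparing the two rules shows they produce the same admissible set for $u$ for every attainable parent value, with a single exception: $B(\alpha,u)=\{a'\}$ is a singleton, $p$ is attainable with the opposite value $\overline{a'}$, and the margin is minimal, $s_{\overline{a'}}(\alpha,u)=s_{a'}(\alpha,u)+1$. Only then does Sankoff admit the extra value $u=\overline{a'}$. Hence the lemma is equivalent to the statement that no node of $T_\Phi$ is, for any adjacency, simultaneously (i) unambiguous, (ii) of margin exactly one, and (iii) reachable with the opposite parent value.

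The verification then exploits the skeleton of combs. A direct computation on the Sankoff recursion shows that joining two subtrees whose roots are ambiguous ($s0=s1$) produces again an ambiguous node, and that at an ambiguous node the admissible value equals the parent's value; thus every spine node of a comb built over ambiguity-rooted pieces is ambiguous and merely transmits the value chosen at the root, introducing no extra solution. This isolates the problem inside the constant-size pieces. For an extra adjacency (present at a single leaf) the unique most parsimonious labelling is all-$0$ off that leaf, with margin $\geq 2$ everywhere, so it coincides with the single Fitch labelling. For a logical adjacency $\alpha_i$ in a clause subtree $T_{c_j}$ with $b_i\notin c_j$, the adjacency is all-$0$ on the blown-up child and all-$1$ on the amending child; both children have margin $2$ at their roots, so the only two labellings are those forced by the two choices at the ambiguous root of $T_{c_j}$, again exactly the Fitch labellings.

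The remaining, and principal, work is the case $b_i\in c_j$. Here I would use the comb reduction to push the analysis inside the fourteen elementary-subtree types of Table~\ref{tab:solutions} and the depth-$3$ amending tree: for each type, with its root value forced to the logical assignment transmitted along the spine, one checks that every most parsimonious labelling is the Fitch one, i.e.\ that no internal unambiguous margin-one node is reachable with the opposite parent value. Each such piece is a cherry or a comb of three cherries over a fixed presence/absence pattern, so this is a finite (if tedious) enumeration, and it is where the delicate design of the patterns---chosen precisely so that the three logical adjacencies are made ambiguous at each root ``without making more than one extra scenario''---does the real work. I expect this multi-case verification to be the main obstacle; the rest is the bookkeeping that assembles the per-piece checks, via independence of adjacencies and the comb transmission property, into the global statement.
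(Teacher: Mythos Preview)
Your reduction to a per-adjacency, node-local criterion is correct, and your comb-transmission step is precisely what the paper uses at nodes with two ambiguous children. The route diverges at what you call the principal case: instead of a per-type enumeration across the fourteen elementary subtrees, the paper invokes a single structural feature of the construction, valid for every logical adjacency $\alpha_i$, namely that ambiguity propagates upward: $B(\alpha_i,v)=\{0,1\}$ implies $B(\alpha_i,u)=\{0,1\}$ for the parent $u$ of $v$. This forces every $\alpha_i$-unambiguous node to lie above a subtree that is \emph{constant} for $\alpha_i$, whence its Sankoff margin is at least $2$ (infinite at leaves), and your condition (ii) can never be met. So the ``tedious'' verification you flag as the main obstacle dissolves once upward propagation is stated; it is exactly the property the elementary subtrees and the amending tree were designed to guarantee. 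Your approach would succeed, but it trades this one-line structural observation for a case check that obscures why the construction works.

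One small slip: for an extra adjacency $\alpha_x$ present at a single leaf, the margin is not $\geq 2$ everywhere. The leaf's parent is ambiguous ($s_0=s_1=1$), and the grandparent $g$ has $s_0(g)=1$, $s_1(g)=2$, hence margin $1$. Your conclusion is still correct because condition (iii) fails there: one level above $g$ the margin is already $\geq 2$, so $g$ is never reached with parent value $1$. The paper sidesteps this by observing directly that any optimal labelling for $\alpha_x$ makes exactly one change, on the pendant edge, and that this is the Fitch labelling.
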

\begin{proof}
There are two types of adjacencies participating in $T_{\Phi}$.
There are $n$ of them related to the logical variables in $\Phi$, the other
adjacencies are introduced in the construction and are present on exactly one leaf,
absent everywhere else in $T_\Phi$.

If an adjacency $\alpha_x$ is present only on one leaf, then in any
$\SPSCJ$ solution it is created on the edge connecting the leaf to
the remaining part of the tree. This solution is provided by
Fitch's algorithm.

The tree is constructed in such way that for all $\alpha_i$
representing variable $b_i \in \Phi$,
\begin{equation}
B(\alpha_i,v) = \{0,1\} \Rightarrow B(\alpha_i,u) = \{0,1\}
\end{equation}
where $u$ is the parent of $v$. First observe that
\begin{equation}
B(\alpha_i,v) = \{0,1\} \Rightarrow s1(\alpha_i,u) = s0(\alpha_i,u)
\end{equation}
this means that whenever the two children $v_1$ and $v_2$ of
a node $u$ are ambiguous in Fitch's algorithm, 
\begin{eqnarray}
s1(\alpha_i,u) &= &s1(\alpha_i,v_1) + s1(\alpha_i,v_2) \\
s0(\alpha_i,u) &= &s0(\alpha_i,v_1) + s0(\alpha_i,v_2) 
\end{eqnarray}
namely, all Sankoff solutions are Fitch solutions.

Moreover, at any node $u$ where the $B$ value is ambiguous for some adjacency $\alpha_i$, 
while it is not ambiguous in the children of $u$, we have
\begin{equation}
s0(\alpha_i,u) = s1(\alpha_i,u) = 1
\end{equation}
and here again the Fitch solutions are the same as the
Sankoff solutions. 
\end{proof}

Now we are ready to prove Theorem~\ref{theo:non-approx}.

\begin{proof} {\bf (Theorem~\ref{theo:non-approx}.)}
Let $\Phi$ be a $\CNF$ with $k$ clauses. The number of Boolean
variables in $\Phi$ is at most $3k$, hence the tree $T_{\Phi}$
contains at most 
\begin{equation}
\left(248 \times \left(\left\lceil\frac{k \log((3k-3)!) + 3k \log(2)}{\log(\gamma)}\right\rceil +1\right) + 8\right)
\times k \label{eq:leaves}
\end{equation}
leaves, and 
\begin{equation}
6k + 296k\times \left(\left\lceil\frac{k \log((3k-3)!) + 3k
    \log(2)}{\log(\gamma)}\right\rceil +1\right) \label{eq:adj}
\end{equation}
extremities (twice the number of independent adjacencies
appearing). To explain Equation~\ref{eq:leaves}, $248$ is the number
of leaves on the unit subtree, it is reapeted $\left\lceil\frac{k \log((n-3)!) + 3k
\log(2)}{\log(\gamma)}\right\rceil +1$ times, an upper bound for $n$ is $3k$, as
mentioned above, and there are $8$ further leaves in the amending
phase of the construction of a subtree $T_{c_j}$ for a clause
$c_j$. Finally, there are $k$ clauses. To explain Equation~\ref{eq:adj},
there is an adjacency for each boolean variable, there are at most
$3k$ of them, each of them having $2$ extremities, yielding $6k$
extremities at most. There are $148$ extra adjacencies in each unit
subtree, having $296$ extremities. Each unit subtree is repeated
$\left\lceil\frac{k \log((n-3)!) + 3k
    \log(2)}{\log(\gamma)}\right\rceil +1$ times, upperly bounded by
$\left\lceil\frac{k \log((3k-3)!) + 3k
    \log(2)}{\log(\gamma)}\right\rceil +1$, and this is done for each
$k$ clauses. 
 
Hence the input size for the $\SPSCJ$ problem is a
polynomial function of the size of $\Phi$.

If $\Phi$ is satisfiable, then there exists an assignment for which
the number of $\SCJ$ scenarios is at least
\begin{equation}
Y^k \times (((n-3)!)^k \times 2^n \times \gamma)^k
\end{equation}
If at least one of the clauses is not satisfied, then the total number of $\SCJ$ scenarios
is at most
\begin{equation} 
Y^k \times (((n-3)!)^k \times 2^n \times \gamma)^{k-1} \times \left((n-3)!\right)^k
\end{equation} 
Therefore, if $\Phi$ is satisfiable, there are at most $2^n-1$ assignments which do not satisfy
the $\Phi$, and the number of corresponding $\SCJ$ scenarios is at
most 
\begin{equation}
Y^k \times (((n-3)!)^k  \times 2^n  \times \gamma)^{k-1} \times \left((n-3)!\right)^k \times (2^n-1).
\end{equation} Hence if $\Phi$ is satisfiable, then the number of
$\SCJ$ scenarios related to satisfying assignments are more than the
number of other $\SCJ$ scenarios. If an $\FPAUS$ exists for all most parsimonious
scenarios, then it would sample satisfying scenarios with more than
$0.5$ probability. Then this is an $\RP$ algorithm for $\SAT$. 
An $\RP$ algorithm for $\SAT$ immediately implies that $\RP = \NP$ \citep{Papadimitriou1993}.
\end{proof}

\subsection{Counting problems}

The same construction is sufficient to prove Theorem~\ref{theo:pnp}.
\begin{proof}{\bf (Theorem~\ref{theo:pnp}.)}
Assume that there is an $\FP$ algorithm for $\sharpSPSCJ$. Then for
any $\CNF$ $\Phi$, construct the above introduced problem instance
$x \in \sharpSPSCJ$, and calculate the exact number of solutions. If
$\Phi$ is not satisfiable, then the number of solutions is at most
\begin{equation}
Y^k \times (((n-3)!)^k  \times 2^n  \times \gamma)^{k-1} \times \left((n-3)!\right)^k \times 2^n\label{eq:notsat}
\end{equation}
If $\Phi$ can be satisfied, then the number of solutions is more than
\begin{equation}
Y^k \times (((n-3)!)^k  \times 2^n  \times \gamma)^k\label{eq:sat}
\end{equation}
Since the number in Equation~\ref{eq:sat} is greater than the number
in Equation~\ref{eq:notsat}, and the number of digits of these numbers
grows only polynomially with $|\Phi|$, given an $\FP$ algorithm for
$\sharpSPSCJ$, it would be decidable in polynomial running time
whether or not $\Phi$ is satisfiable. Since $\SAT \in \NPcomp$,
it would imply that $\P = \NP$.
\end{proof}

Now we prove the counting counterpart of the same result, that is,
$\sharpSPSCJ$ is not in $\FPRAS$ unless $\RP=\NP$. For this we need to define a
more restricted problem. 

\begin{definition}
The $\sharpFSPSCJ$ problem asks for the number of Fitch solutions of an
$\SPSCJ$ instance where pairs of adjacencies never share an extremity
and the values of a set of ambiguous adjacencies are fixed.
\end{definition}

Although \emph{stricto sensu}, the $\sharpFSPSCJ$ is still not a self
reducible counting problem, we can prove that it has an $\FPAUS$
algorithm if it has an $\FPRAS$ algorithm. Before proving it, we
discuss in a nutshell how to construct an $\FPAUS$ algorithm from an
$\FPRAS$ algorithm for self-reducible counting problems. The
description is not detailed, for a strict mathematical description,
see \cite{Sinclair1992}.

The heart of the method that creates an $\FPAUS$ from a self-reducible
counting problem in $\FPRAS$ is a rejection sampler
\citep{neumann1951}. A random solution is drawn sequentially
travelling down the counting tree of the self-reducible problem, using
the $\FPRAS$ approximations for the children of the current node, and
at each internal node the sampling probability is calculated. The
sampling probabilities are used to calculate the so called rejection
rate, the probability that the sample will be rejected. The central
theorem of the rejection method states that the accepted samples come
from sharp the uniform distribution. To transform this into an
$\FPAUS$, the rejection rate should be relatively small, so in a few
(polynomial number of) trials, the probability that all trials are
rejected becomes negligible. If all trials are rejected, then an
arbitrary solution is drawn, but due to its extremely small
probability, it causes a very small deviation from the uniform
distribution (measured in variational distance).

\begin{lemma}\label{lem:almost_self_reducible}
$\sharpFSPSCJ \in \FPRAS \Rightarrow \sharpFSPSCJ \in \FPAUS$
\end{lemma}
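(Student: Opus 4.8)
The plan is to exhibit, for $\sharpFSPSCJ$, a self-reduction tree clean enough to run the rejection sampler sketched just above, even though the problem is not self-reducible in the textbook sense. Recall that a solution here is a pair consisting of a Fitch genome assignment and an $\SCJ$ scenario realizing it. Because no two adjacencies share an extremity, the parsimony problem splits over the individual adjacencies, and, by the discussion of Fitch's down-pass together with Lemma~\ref{lem:onlyFitch}, a Fitch assignment is determined entirely by the $0/1$ value chosen at the root for each adjacency that is ambiguous at the root and not yet fixed. This suggests the branching rule: pick one such free adjacency $\alpha$ and split the instance into the two instances obtained by adding $\alpha$ (set to $0$, resp. to $1$) to the set of fixed ambiguous adjacencies. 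Each child is again a legitimate $\sharpFSPSCJ$ instance, so the hypothesized $\FPRAS$ can be invoked at every node; the depth of the tree is the number of free ambiguous adjacencies, hence at most $n$ and polynomial; and fixing $\alpha$ partitions the solutions of the parent, so the exact counts obey $N(v)=N(v_0)+N(v_1)$ at every internal node.

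At a leaf all ambiguous root values are fixed, the Fitch assignment $\phi$ is unique, and the number of scenarios realizing it is exactly computable: on each tree edge $e$ the $d_e$ operations act on distinct, independent adjacencies, so every ordering is valid and the per-edge count is $d_e!$, giving the leaf weight $W(\phi)=\prod_e d_e!$ (the product-of-factorials specialisation of Theorem~\ref{theo:sharpMPSCJ}). Correspondingly, once $\phi$ is known a scenario can be drawn \emph{exactly} uniformly in polynomial time by choosing an independent uniformly random permutation of the operations on each edge. I would therefore factor the sampler into two stages: draw the assignment $\phi$ with probability $W(\phi)/N(\root)$ by a rejection sampler over the self-reduction tree, then append an exact uniform scenario; the composition is exactly uniform over all solutions.

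For the first stage I would run the Sinclair--Jerrum rejection sampler \citep{Sinclair1992}: fix one $\FPRAS$ estimate $\hat N(\root)$ of the root count, then repeatedly descend from the root, at each node going to child $v_i$ with probability proportional to a fresh $\FPRAS$ estimate $\hat N(v_i)$; on reaching a leaf $\phi$ (where $W(\phi)$ is known exactly, so no estimate is needed) accept with probability $A(\phi)=e^{-2\eta d}\,W(\phi)/\big(\hat N(\root)\,P(\phi)\big)$, where $P(\phi)$ is the probability the descent produced $\phi$, $d$ is the depth and $\eta$ the per-estimate multiplicative error bound. A telescoping identity shows $\hat N(\root)P(\phi)=W(\phi)\prod_{t} \hat N(u_t)/\big(\hat N(u_t^{(1)})+\hat N(u_t^{(2)})\big)$ along the path $u_0=\root,\dots,u_d=\phi$ with children $u_t^{(1)},u_t^{(2)}$, so $A(\phi)\in[e^{-4\eta d},1]$ whenever every estimate stays inside its $e^{\pm\eta}$ window. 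Crucially the intermediate estimates cancel in $P(\phi)A(\phi)=e^{-2\eta d}W(\phi)/\hat N(\root)$, so---conditioned on acceptance---$\phi$ is drawn with probability exactly $W(\phi)/N(\root)$, independently of the estimation errors. Choosing $\eta=\Theta(1/d)$ makes the per-trial acceptance probability $\Omega(1)$, so polynomially many trials suffice.

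Finally I would bound the two sources of deviation from exact uniformity, both drivable below any prescribed $\epsilon$ within time polynomial in $|x|$ and $-\log\epsilon$: the event that some of the $O(d)$ $\FPRAS$ calls per trial falls outside its $e^{\pm\eta}$ window (controlled by a union bound after setting the $\FPRAS$ failure parameter $\delta'$ polynomially small, since its cost is only logarithmic in $1/\delta'$), and the event that all of the polynomially many trials are rejected, after which an arbitrary solution is output. The main obstacle---and the reason the cited self-reducibility theorem cannot be quoted verbatim---is precisely that a solution bundles a genome assignment with a multinomial family of scenarios, so the leaves of the self-reduction carry weights $\prod_e d_e!\neq 1$; the key is to verify that these exactly computable leaf weights can be folded into the acceptance probability (relegating the scenario to a separate exact sampler) so that accepted samples remain \emph{exactly} weight-proportional.
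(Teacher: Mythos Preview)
Your proposal is correct and follows essentially the same approach as the paper: build a self-reduction tree by successively fixing the root value of each free ambiguous adjacency (so every internal node has two children, each again a $\sharpFSPSCJ$ instance), and at the leaves---where the Fitch assignment is uniquely determined---use exact polynomial-time counting and sampling of the $\SCJ$ scenarios on the edges; you then spell out the Sinclair--Jerrum rejection sampler more explicitly than the paper does, which merely asserts that such a counting tree suffices. One small remark: your appeal to Lemma~\ref{lem:onlyFitch} is unnecessary and slightly misplaced (that lemma concerns the particular tree $T_\Phi$), since the determinism of a Fitch assignment given the root choices follows directly from the deterministic down-pass rule for $F(\alpha,v)$; your specialization of the leaf count to $\prod_e d_e!$ is correct because the pairwise-disjoint-extremity hypothesis in $\sharpFSPSCJ$ forces every non-trivial component of each edge's adjacency graph to carry a single $\SCJ$ operation.
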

\begin{proof}
It is sufficient to show that the solutions can be put onto a counting
tree such that the depth of the tree is $O(poly(|x|))$  where $|x|$ is
the size of the problem instance, and for any internal node, one of the following is true:
\begin{itemize}
\item The number of descendants of the internal node is
$O(poly(|x|))$ where $|x|$ is the size of the problem instance, and for each descendant, a problem
$x' \in \sharpFSPSCJ$ exists whose number of solutions is the number of
leaves of that tree, and $|x'| = O(poly(|x|))$.
\item The number of descendants is $O(c^{poly(|x|)})$ for some $c>1$,
  but a perfect sampler exists that can sample sharp the uniform
  distribution of the descendants and the number of descendants can be
  calculated, both the sampler and the counter run in $O(poly(|x|))$
  time. Furthermore, all descendants are leaves.
\end{itemize} 
The algorithms in the second case provide that the protocol constructing an $\FPAUS$
sampler using an $\FPRAS$ algorithm described briefly above
can be done also for those nodes which have suprapolynomial number of
descendants but counting their number as well as sharp uniform sampling
them can be done in polynomial time. Indeed, both sampling and
calculating the sampling probabilities can be done in polynomial
running time, and it is easy to see that the strict uniform sampling
does not increase the rejection rate.

Fix an arbitrary total ordering of adjacencies.
Let $u$ be an internal node, and $z$ is the associated problem to
it. If there are ambiguities at the root of the evolutionary tree,
then take the smallest adjacency with ambiguity and without a
constraint, let it be denoted by $\alpha$. Then $u$ will have two
descendants, and they are associated with a problem instance where
problem instance $z$ is modified such that $\alpha$ has constraint
$0$ and constraint $1$. 

If $z$ does not have any ambiguity, then its assignment is unique. For
this unique assignment, the number of $\SCJ$ scenarios along each edge
can be counted and sharply uniformly sampled (Theorem~\ref{theo:sampling}), so these will be the
descendants of $u$ and also the leaves below $u$.
\end{proof}

The next lemma leads directly to the proof of
Theorem~\ref{theo:hard_FPRAS}.
\begin{lemma}\label{lem:2fpras}
$\sharpSPSCJ \in \FPRAS \Rightarrow \sharpFSPSCJ \in \FPRAS$
\end{lemma}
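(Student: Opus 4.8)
The plan is to reduce an arbitrary $\sharpFSPSCJ$ instance to an $\SPSCJ$ instance on which the assumed $\FPRAS$ can be invoked. Starting from a $\sharpFSPSCJ$ instance — a tree $T$ with independent adjacencies (no two sharing an extremity) together with a set of ambiguous adjacencies pinned to fixed values — I would build an $\SPSCJ$ instance $T'$ and an explicitly computable constant $c$ such that $\sharpSPSCJ(T') = c \cdot \sharpFSPSCJ(T)$. Running the $\sharpSPSCJ$ $\FPRAS$ on $T'$ with parameters $\epsilon,\delta$ and dividing the output by $c$ then yields a $(1+\epsilon)$-approximation of $\sharpFSPSCJ(T)$ with probability at least $1-\delta$; since the transformation $T\mapsto(T',c)$ is polynomial in $|T|$ and $c$ has polynomially many digits, the running-time and accuracy guarantees transfer verbatim, which is exactly what the claim requires.

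Two features separate the two problems, and $T'$ must neutralise both. First, $\sharpSPSCJ$ counts all Sankoff solutions whereas $\sharpFSPSCJ$ counts only Fitch solutions; second, $\sharpFSPSCJ$ fixes the values of certain ambiguous adjacencies. To remove the first gap I would graft, at the internal nodes of $T$ where a Sankoff solution could fail to be a Fitch solution, amending gadgets of exactly the kind used in the main construction — the constant-size, depth-$3$ balanced subtrees that make an adjacency ambiguous at their root while admitting a single $\SCJ$ scenario. As in Lemma~\ref{lem:onlyFitch}, these gadgets enforce $B(\alpha,v)=\{0,1\}\Rightarrow B(\alpha,u)=\{0,1\}$ along every edge $(u,v)$, hence $s1(\alpha,u)=s0(\alpha,u)$ wherever an ambiguity propagates, so that on $T'$ every Sankoff solution is a Fitch solution and the two counts coincide. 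Because each such gadget admits a single scenario, it contributes a factor $1$ and is absorbed into $c$ for free. To remove the second gap I would force each pinned adjacency $\alpha$ to its prescribed value $b$ by grafting a forcing gadget at the root so that $B(\alpha,\root)=\{b\}$ and $\alpha$ is no longer ambiguous; using independence, the additional $\SCJ$ operations such a gadget introduces are determined and contribute a further known factor to $c$.

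With both gadget families in place, the objects counted on $T'$ — all Sankoff solutions, equivalently all Fitch solutions of $T'$ — biject with the constraint-satisfying Fitch solutions of $T$, and their per-edge scenario counts agree up to the constant $c$ collected from the gadget edges (each an independent block of $d_e!$ orderings, by Theorem~\ref{theo:sampling}). The hard part will be the design and verification of the forcing gadgets: a naive single-leaf graft pollutes the other, independent adjacencies by turning them ambiguous at the new root, so the gadget must be engineered from cherries bearing extra single-leaf adjacencies, exactly as in Table~\ref{tab:solutions}, so as to leave every adjacency other than $\alpha$ untouched while still contributing a factor to $c$ that can be written in closed form. Once the gadgets are shown to preserve adjacency independence, to realise the intended ambiguity pattern, and to contribute a computable factor, the identity $\sharpSPSCJ(T')=c\cdot\sharpFSPSCJ(T)$ follows and the $\FPRAS$ transfers, completing the reduction.
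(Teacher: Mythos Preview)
Your plan is considerably heavier than the paper's, and the heaviest part — grafting gadgets at internal nodes to collapse the Sankoff/Fitch gap — is where it breaks. The paper avoids all tree surgery. It leaves $T$ untouched and works purely on the adjacency set: for each constrained adjacency $\alpha\in A$, the fixed root value makes Fitch's down-pass deterministic, so the set $E(\alpha)$ of edges along which $\alpha$ flips is well defined. The paper then \emph{deletes} $\alpha$ and introduces $|E(\alpha)|$ fresh independent adjacencies $\alpha_e$, one per $e\in E(\alpha)$, with leaf patterns chosen so that the unique parsimonious history of $\alpha_e$ is a single operation on edge $e$ (present exactly at the leaves below $e$ if $\alpha$ is created there, absent exactly at those leaves if it is cut there). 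The new instance $x'$ has no constraints, has size $O(|x|+|A|\cdot|T|)$, and has \emph{exactly} the same number of $\SCJ$ scenarios as $x$, so your multiplicative constant is simply $c=1$ and no forcing or amending subtrees are needed at all.

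The ``Fitch $=$ Sankoff'' grafts you propose are the real trouble. Grafting at an internal node of a binary tree either breaks binarity or subdivides an incident edge, and either change perturbs the Fitch and Sankoff recursions along the whole path from that node to the root, not just locally. Lemma~\ref{lem:onlyFitch} is not a portable gadget: it is a \emph{global} property of $T_\Phi$, obtained because every elementary block was designed so that ambiguity propagates monotonically upward. You cannot install that property edge-by-edge on an arbitrary $T$, and your proposal does not say how the grafted subtrees would interact with one another or with the existing leaf patterns for the remaining adjacencies. The paper's replacement trick makes the whole concern evaporate for the constrained adjacencies, since each surrogate $\alpha_e$ has a unique optimal assignment and Fitch and Sankoff coincide trivially for it. As for the unconstrained adjacencies, the paper does not add anything for them either — it simply leaves them in place — so the discrepancy you set out to repair with grafts is not something the paper's argument neutralises by construction; it is handled, if at all, by the surrounding context rather than inside this lemma.
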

\begin{proof}
Let $x$ be a problem instance from $\sharpFSPSCJ$. Let $A$ denote the
set of adjacencies which are ambiguous at the root, but there are
constraints on them. Let $T$ denote the evolutionary tree of the
problem instance $x$.

We construct another problem $x'$, which has the same number of $\SCJ$
solutions but there are no ambiguities for those adjacencies which are
in $A$. We remove each $\alpha \in A$, and introduce new, independent
adjacencies. For any $\alpha \in A$, let $E(\alpha)$ denote the set of
edges of $T$ for which an SCJ operation is necessary with the
prescribed assignment of $\alpha$. We introduce $|E(\alpha)|$ new,
independent adjacencies in the following way. For each $e \in
E(\alpha)$, if $\alpha$ is generated on the edge, then let the
corresponding adjacency $\alpha_e$ be present at 
the leaves below edge $e$, and nowhere else. Otherwise, if $\alpha$ is
cut along the edge $e$, let the corresponding adjacency $\alpha_e$ be
absent  at the leaves below edge $e$, and be present at all
other leaves. It is easy to see that
the only most parsimonious solution for $\alpha_e$ is to create or cut
$\alpha_e$ with an SCJ operation along edge $e$. Clearly, $x \in
\sharpSPSCJ$, as there are no constraints on its adjacencies, the
number of solutions for $x'$ is the same as the number of solutions
for $x$, moreover
\begin{equation}
|x'| = O(|x| + |A| \times |T|).
\end{equation}
Therefore an $\FPRAS$ algorithm for $x'$ is also an $\FPRAS$ for $x$.
\end{proof}
We can now prove Theorem~\ref{theo:hard_FPRAS}.
\begin{proof} {\bf (Theorem~\ref{theo:hard_FPRAS}).}
From Lemma~\ref{lem:2fpras}
\begin{equation}
\sharpSPSCJ \in \FPRAS \Rightarrow \sharpFSPSCJ \in \FPRAS
\end{equation}
From Lemma~\ref{lem:almost_self_reducible},
\begin{equation}
\sharpFSPSCJ \in \FPRAS \Rightarrow \sharpFSPSCJ \in \FPAUS
\end{equation}
Putting these together, we get that
\begin{equation}
\sharpSPSCJ \in \FPRAS \Rightarrow \sharpFSPSCJ \in \FPAUS
\end{equation}
But from the proof of Theorem~\ref{theo:non-approx}. It is clear that
an $\FPAUS$ already for $\sharpFSPSCJ$ would imply that $\RP = \NP$.
\end{proof}

\section{Discussion/Conclusions}
\label{sec:conclusion}

We proved non-approximability for a counting problem motivated
by computational biology, whose optimization/decision
counterpart problem is in $\P$.

The problem is related to the evolution of discrete characters:
imagine a set of $n$ independent characters from a finite set
(a nucleotide sequence where all nucleotides evolve independently
for instance), and a set of species related by a binary phylogenetic
tree. The values of the $n$ characters are known at the leaves,
and the small parsimony problem asks for assignments at the
internal nodes of the tree. Here finding one most parsimonious
assignment is easy, but it is also easy to count their number or
sample them uniformly, when they are all independent, which is not the
case for adjacencies in genomes. However, if the assignments are
weighted by the number of most parsimonious evolutionary scenarios on
the whole set of characters, then there is no possible efficient
counting or sampling method. Indeed, in our proof all adjacencies are
independent, so it applies to this more general problem.

This study also highlights a counting bias in the parsimony $\SCJ$
model with independent adjacencies (or evolutionary scenarios on discrete characters). For example,
take a cherry with ambiguous values at its root. The number of scenarios
is higher if the assignment at the root of the cherry is equal to one of the
leaves than if it is a mix between the two. In an unbiased model
all assignments should be equiprobable. This observation leads to two possible
directions for future work:
\begin{itemize}
\item {\bf Counting assignments.} If all assignments should be equiprobable,
then the problem is to count and sample in the assignment solution space.
It is our unpublished result that counting
the number of Fitch solutions to $\SPSCJ$ is in $\FP$, but counting the Sankoff type assignments
has an unknown computational complexity.
\item {\bf Probabilistic models.} The bias of the parsimony model will
drop in a probabilistic approach. Here mutations follow a continuous
  time Markov model. In that case, each potential $\SCJ$ operation has an
  exponential waiting time for the occurrence. The so-called trajectory
  likelihood can be calculated analytically, see \cite{mlh2004}. The
  sum of the trajectory likelihoods is the total likelihood of two
  genomes, {\em i.e.}, what is the probability that genome $G_1$ becomes
  genome $G_2$ after time $t$, given a set of parameters for
  the exponential distributions put onto the potential $\SCJ$
  operations. The total likelihood calculation has an unknown
  computational complexity.

  We can also consider the probabilistic approach on a tree. In case
  of independent events, it can be shown that the multinomial
  coefficients describing how many combinations exist to merge the
  independent $\SCJ$ operations are cancelled out in the likelihood
  calculations. If all edge lengths of the evolutionary tree are the
  same, and all adjacencies are independent, then the probabilistic
  $\sharpSPSCJ$ problem reduces to counting the assignments to the
  internal nodes of the evolutionary tree, which might have a simpler
  computational complexity.
\end{itemize}

These are promising future directions of research, which can
be important for comparative genomics. To close the mathematical
aspects of the $\SPSCJ$ problem, two unsolved
questions remain:
\begin{itemize}
\item {\bf $\sharpP$-completness of $\sharpSPSCJ$.} Our conjecture is
  that $\sharpSPSCJ \in
  \sharpPcomp$. Theorem~\ref{theo:pnp} strengthens this
  conjecture. Although $\sharp3SAT \in \sharpPcomp$, the construction in the proof of
  Theorem~\ref{theo:non-approx} is not sufficient for counting the
  number of satisfying assignments of $\Phi$. For each satisfying
  assignment, there is a multiplicative coefficient that can vary
  between $\left(\frac{n-3}{2}\right)!^2$ and $(n-3)!$, and this
  shadows the exact number of solutions.
\item {\bf Star tree problem.} Given a set of genomes,
  $G_1,G_2,\ldots G_k$ related to a star tree,  count and sample their
  most parsimonious $\SCJ$ scenarios. If $k$ is odd, then the
  assignment for the centre of the star tree is unique. It is proved
  for the median genome of $3$ genomes by \cite{fm2011}, and their
  proof can be extended to any odd number of genomes. However, when
  $k$ is even, then the median might not be unique, and there might be
  exponentially many solutions for the assignment. The computational
  complexity for this case is an open question. This generalizes to the
  small parsimony problem on non-binary trees.
\end{itemize}

\section{Acknowledgments}

I.M. was supported by OTKA grant PD84297. S.Z.K. was supported by OTKA grants
K77476 and NK 105645.
% \begin{acks}
% The authors would like to thank Dr. Maura Turolla of Telecom
% Italia for providing specifications about the application scenario.
% \end{acks}

% Bibliography
%\bibliographystyle{acmsmall}
\bibliographystyle{elsarticle-harv}
\bibliography{references}

% History dates
%\received{February 2007}{March 2009}{June 2009}

% Electronic Appendix
% \elecappendix

% \medskip

% \section{This is an example of Appendix section head}

% Channel-switching time is measured as the time length it takes for
% motes to successfully switch from one channel to another. This
% parameter impacts the maximum network throughput, because motes
% cannot receive or send any packet during this period of time, and it
% also affects the efficiency of toggle snooping in MMSN, where motes
% need to sense through channels rapidly.

% By repeating experiments 100 times, we get the average
% channel-switching time of Micaz motes: 24.3 $\mu$s. We then conduct
% the same experiments with different Micaz motes, as well as
% experiments with the transmitter switching from Channel 11 to other
% channels. In both scenarios, the channel-switching time does not have
% obvious changes. (In our experiments, all values are in the range of
% 23.6 $\mu$s to 24.9 $\mu$s.)

% \section{Appendix section head}

% The primary consumer of energy in WSNs is idle listening. The key to
% reduce idle listening is executing low duty-cycle on nodes. Two
% primary approaches are considered in controlling duty-cycles in the
% MAC layer.

\end{document}